\documentclass[aps,pre,lengthcheck,superscriptaddress,nofootinbib,onecolumn,longbibliography]{revtex4-2}
\usepackage{amsmath,amssymb,amsfonts,physics,bigints,mathtools,amsthm}
\usepackage{graphicx}
\usepackage{appendix}
\usepackage[dvipsnames]{xcolor}
\usepackage{xurl}
\usepackage[colorlinks,allcolors=BrickRed,unicode,breaklinks=true]{hyperref}

\newcommand{\eq}[1]{\begin{align}#1\end{align}}
\newcommand{\bs}{\boldsymbol}
\newcommand{\mr}{\mathrm}

\newtheorem*{lemma}{Lemma}
\newtheorem*{corollary}{Corollary}

\begin{document}

\title{Linear stability and viscoelastic response in weakly jammed frictional granular systems}

\author{Masanari Shimada}

\email{m.shimada@mosk.tytlabs.co.jp}

\author{Yusuke Hara}

\author{Shihori Koyama}

\author{Ryohei Seto}

\author{Maria Yokota}

\email{yokota@mosk.tytlabs.co.jp}

\affiliation{Toyota Central R\&D Labs., Inc., Nagakute, Aichi 480-1192, Japan}

\begin{abstract}
Granular particles typically feature contact friction, which gives rise to various nontrivial properties of granular materials.
In this work, we investigate the linear stability of static packings composed of frictional spheres, whose interactions are modeled using the discrete element method.
We discuss a key difficulty in properly modeling static friction and address the problem of identifying rattlers.
We also explore the spectrum of the energy matrix, which governs the linear viscoelastic response of the system.
The intermediate- and high-frequency regimes of the spectrum remain qualitatively independent of both the friction coefficient and the packing fraction.
\end{abstract}

\maketitle

\section{Introduction}

Understanding and controlling friction remain major challenges in both fundamental physics and engineering~\cite{Brown1970Principles,deGennes1999Granular}.
In particular, granular materials such as sand and powder are typically composed of particles with frictional contacts and are widely used in many industrial applications.
Technologies that utilize these materials, however, still rely heavily on heuristic approaches because their behavior remains poorly understood in the field of soft condensed matter physics.
These mesoscopic granular particles are not subject to thermal fluctuations and do not reach thermal equilibrium.
Because of history-dependent contact friction, the state of the system cannot be defined solely by the current positions and momenta of the particles.
These properties of granular materials fall outside the conventional framework of statistical mechanics.

The jamming transition is central to our understanding of granular materials~\cite{Liu1998Jamming,OHern2003Jamming,Bi2011Jamming}.
When the packing fraction of a granular system is sufficiently low, the particles settle into configurations in which they do not contact one another.
In contrast, when the packing fraction exceeds a threshold value, known as the jamming transition point, the particles form an amorphous contact network that behaves like a solid; namely, the system supports a finite pressure and exhibits a finite shear modulus.
This non-equilibrium phase transition is unique to granular materials and has attracted much attention over the past few decades.

These jammed solids share some similarities with molecular glasses, another class of amorphous solids formed by quenching liquids, and considerable effort has been devoted to developing a unified framework for understanding a wide range of amorphous materials~\cite{Liu1998Jamming}.
With the development of various mean-field theories~\cite{Wyart2010Scaling,DeGiuli2014Effects,Franz2015Universal,Parisi2020Theory}, several important aspects of jammed materials have become clear.
These theories explain how particle packings reach jammed states and how the vibrational densities of states of jammed solids differ from those of their crystalline counterparts.

In parallel with these theoretical developments, numerical and experimental studies have also advanced significantly.
Linear stability analysis of jammed solids has established the existence of a plateau region in the intermediate-frequency regime of the vibrational density of states, which is followed on the lower-frequency side by the so-called non-Debye scaling region~\cite{Silbert2005Vibrations,Charbonneau2016Universal}.
The vibrational modes in these regimes are considered to be responsible for anomalous thermal and viscoelastic properties of amorphous materials.
These findings have also been examined experimentally using, for example, photoelastic disks~\cite{Majmudar2005Contact} and microrheological techniques~\cite{Hara2025Alink}.

Despite these achievements, almost all previous studies have been restricted to frictionless systems.
History-dependent contact friction fundamentally alters the properties of jammed packings.
Since it strongly damps particle motions, granular packings settle into mechanically stable yet significantly sparser amorphous structures than those formed by frictionless particles~\cite{Santos2020Granular}.
As a result, the jamming transition point in frictional systems is lower than that in frictionless systems.
Such sparse structures are characterized by the contact number per particle, which is six for frictionless packings at jamming but approaches four in the limit of strong friction.

To characterize the effects of contact friction on jammed packings, several studies have performed linear stability analyses of static packings of frictional particles in a manner similar to that used for frictionless systems~\cite{Chattoraj2019Oscillatory,Chattoraj2019Noise,Charan2020Transition,Bonfanti2020Oscillatory,Ishima2023Theory}.
The models employed in these studies are based on the discrete element method (DEM)~\cite{Cundall1979A}, in which history-dependent contact friction is represented by a virtual spring acting in the tangential plane at a contact.
This simplification enables large-scale simulations of frictional granular packings, and the DEM has been widely used in many areas of physics and engineering.
In particular, representing contact friction by a tangential spring makes the interparticle forces and torques differentiable, allowing one to compute their Jacobian matrix, which is the central quantity in linear stability analysis.
Refs.~\cite{Chattoraj2019Oscillatory,Chattoraj2019Noise,Bonfanti2020Oscillatory} demonstrated that frictional systems exhibit the so-called oscillatory instability under quasistatic shear in both two and three dimensions.
In Ref.~\cite{Ishima2023Theory}, the shear modulus of frictional granular packings was measured using a method originally developed for frictionless amorphous solids~\cite{Lemaitre2006Sum}.

The linear stability analysis of frictional granular packings has attracted growing interest in recent years, yet many aspects remain elusive.
In Refs.~\cite{Chattoraj2019Oscillatory,Chattoraj2019Noise,Bonfanti2020Oscillatory,Ishima2023Theory}, the initial mechanically stable configurations were prepared with frictional interactions artificially suppressed.
This protocol is useful for obtaining clean data that can be directly compared with results for frictionless systems. 
However, the resulting configurations differ substantially from experimentally generated granular packings, in which contact friction is inevitably present throughout the preparation process.
Given the strong dependence of the sparsity of contact networks on the friction coefficient~\cite{Santos2020Granular}, it is important to clarify how the friction coefficient affects linear stability of granular packings.
Furthermore, in materials engineering, the friction coefficient of granules is often tuned by coating them with another material~\cite{Shimamoto2022Adhesive}, making the friction coefficient a key design parameter in granular materials.

Motivated by these considerations, we investigate granular packings generated through a preparation process in which contact friction is present throughout the simulation.
Within the framework of linear stability analysis, we focus on how contact friction affects the time evolution of the mechanical energy.
After linearizing the equations of motion, the mechanical energy can be expressed as a quadratic form associated with a symmetric matrix.
As in frictionless systems, the spectrum of this matrix governs the linear viscoelastic response of the system~\cite{Lemaitre2006Sum,Tighe2011Relaxations,Hara2025Alink,Koyama2025Enhanced}.
Although the DEM has several limitations for studying low-frequency properties, the intermediate- and high-frequency regimes of the spectrum are found to depend only weakly on the friction coefficient and the packing fraction.

This paper is organized as follows.
In Sec.~\ref{sec: model and methods}, we introduce the model adopted in this study and the numerical protocol used to generate granular packings.
In Sec.~\ref{sec: theory}, we present the theoretical framework underlying the linear stability analysis of granular packings and discuss a challenge in examining the stability of frictional systems.
In Sec.~\ref{sec: numerical results}, we present our numerical results, based on which we discuss the definition of rattlers and the dependence of the density of states of the energy matrix on the friction coefficient.
In Sec.~\ref{sec: discussion}, we discuss several alternative approaches to modeling Coulomb friction without resorting to tangential springs.
In Sec.~\ref{sec: summary}, we summarize our findings and conclude the paper.
In Appendix~\ref{sec: mathematical preliminaries}, we introduce the notation and mathematical results used throughout this paper.
In Appendix~\ref{sec: tangential force}, we provide the complete definition of the tangential displacement required to describe the history-dependent tangential spring.
In Appendix~\ref{sec: derivatives of force and torque}, we present the detailed derivation of the Jacobian matrix of the forces and torques.
In Appendix~\ref{sec: minimal model}, we present simple examples illustrating the limitations of the DEM.
In Appendix~\ref{sec: alternative torque}, we provide numerical data demonstrating that the results presented in Sec.~\ref{sec: energy matrix and rattlers} are insensitive to the choice of the definition of pair torque.
In Appendix~\ref{sec: additional numerical results}, we provide additional numerical data and discuss the parameter and system size dependence of our results.

\section{Model and methods}\label{sec: model and methods}

Here we explain the general framework of the DEM~\cite{Cundall1979A}, introduce the model used in this study, and describe how we prepared mechanically stable reference configurations for the linear stability analysis.
We refer to these configurations as inherent structures following the convention of the study of molecular glasses~\cite{Widmer-Cooper2008Irreversible}.
Appendix~\ref{sec: notation} summarizes the mathematical notation used in this paper.

\subsection{Discrete element method}\label{sec: discrete element method}

Consider a system of $N$ spherical particles confined in a three-dimensional cubic box with periodic boundary conditions in all directions.
The mass of particle $i$ is denoted by $m_i$, where $i = 1,\ldots,N$.
The equations of motion are given by
\eq{
    \dv{\bs{r}_i}{t} & = \frac{\bs{p}_i}{m_i} , \label{position} \\
    \dv{\bs{p}_i}{t} & = \bs{F}_i + \bs{F}_{\mr{St},i} = \sum_{j(\neq i)} \bs{F}_{ij} + \bs{F}_{\mr{St},i} , \label{momentum} \\
    \dv{\bs{L}_i}{t} & = \bs{T}_i + \bs{T}_{\mr{St},i} = \sum_{j(\neq i)} \bs{T}_{ij} + \bs{T}_{\mr{St},i} , \label{angular momentum}
}
where $\bs{r}_i$, $\bs{p}_i$, and $\bs{L}_i$ are the position, momentum, and angular momentum of particle $i$, respectively.
The vectors $\bs{F}_i$ and $\bs{T}_i$ denote the internal force and torque acting on particle $i$, respectively, and are given by the sums of the pair forces $\bs{F}_{ij}$ and torques $\bs{T}_{ij}$.
Both $\bs{F}_{ij}$ and $\bs{T}_{ij}$ are zero if particles $i$ and $j$ are not in contact.
We also use the velocity $\bs{v}_i \coloneqq \bs{p}_i/m_i$ and the angular velocity $\bs{\omega}_i \coloneqq \bs{L}_i/J_i$, where $J_i \coloneqq (2/5)m_i R_i^2$ is the moment of inertia with the particle radius $R_i$.
Note that we assume a uniform mass distribution within a particle.
The vectors $\bs{F}_{\mr{St},i} \coloneqq - \gamma_v \bs{v}_i$ and $\bs{T}_{\mr{St},i} \coloneqq - \gamma_\omega \bs{\omega}_i$ are Stokes-type viscous interactions with damping coefficients $\gamma_v$ and $\gamma_\omega$, which are added to accelerate the convergence to inherent structures.
Without these single-particle dissipative interactions, dissipation occurs only when particles are in contact with each other. 
As a result, particles with fewer contacts dissipate energy more slowly, hindering the system from converging to an inherent structure within a reasonable timescale.

\begin{table}
    \renewcommand{\arraystretch}{1.2}
    \setlength\tabcolsep{0.5em}
    \centering
    \caption{
    Number of samples remaining jammed after removing rattlers and pseudo-rattlers for each combination of $\mu$, $\phi$, and $N$.
    The number of initial samples before removal is fixed at 10 for each parameter set.
    See Sec.~\ref{sec: energy matrix and rattlers} for details.
    }
    \begin{tabular}{cccc} \hline \hline
        $\mu$ & $\phi$ & $N=128$ & $N=1024$ \\ \hline
        0.00  & 0.690  & 10      & 10       \\ \hline
        0.01  & 0.650  & 10      & 10       \\
              & 0.670  & 10      & 10       \\
              & 0.690  & 10      & 10       \\ \hline
        0.20  & 0.610  & 10      & 10       \\
              & 0.630  & 10      & 10       \\
              & 0.650  & 10      & 10       \\ \hline 
        1.00  & 0.570  &  7      &  0       \\
              & 0.585  &  9      &  2       \\
              & 0.600  & 10      &  9       \\ \hline \hline
    \end{tabular}
    \label{tab:samples}
\end{table}

We next define the pair force and torque for particles $i$ and $j$ in contact.
The pair force $\bs{F}_{ij}$ is decomposed into two components as
\eq{
    \bs{F}_{ij} & = \bs{F}_{ij}^\parallel + \bs{F}_{ij}^\perp , \\
    \bs{F}_{ij}^\parallel & \coloneqq - f_{n,ij} (\Delta_{ij}) \hat{\bs{r}}_{ij} + \eta_{n,ij} (\Delta_{ij}) \mathcal P^\parallel (\bs{r}_{ij}) \cdot \bs{v}_{ij} , \label{F para} \\
    \bs{F}_{ij}^\perp & \coloneqq \min \qty ( \mu |\bs{F}_{ij}^\parallel|, | \bs{G}_{ij} | ) \hat{\bs{G}}_{ij} , \label{F perp}
}
where $\bs{r}_{ij} \coloneqq \bs{r}_j - \bs{r}_i$ and $\bs{v}_{ij} \coloneqq \bs{v}_j - \bs{v}_i$ are the relative position and velocity between the pair $ij$, respectively, and the normal force $f_{n,ij}$ and damping coefficient $\eta_{n,ij}$ are functions of the overlap $\Delta_{ij} \coloneqq R_i + R_j - r_{ij}\geq0$.
Eqs.~\eqref{F para} and \eqref{F perp} are parallel and perpendicular to the contact normal, respectively.
Eq.~\eqref{F perp} is the standard definition of the tangential force in the DEM while it is only a schematic expression to indicate that $\bs{F}_{ij}^\perp$ approximates the Coulomb friction with the friction coefficient $\mu$.
In particular, the switching from $| \bs{G}_{ij} |$ to $\mu |\bs{F}_{ij}^\parallel|$ corresponds to the transition from static to kinetic friction.
While we provide the complete definitions of $\bs{F}_{ij}^\perp$ and $\bs{G}_{ij}$ in Appendix~\ref{sec: tangential force}, which are necessary for the particle simulations described in the following subsection, we do not consider the transition from static to kinetic friction in the linear stability analysis and assume that all contacts remain sticking in Secs.~\ref{sec: theory} and \ref{sec: numerical results}.
Thus it suffices to consider static friction and Eq.~\eqref{F perp} simplifies to
\eq{
    \bs{F}_{ij}^\perp & = \bs{G}_{ij} \coloneqq k_{s,ij} ( \Delta_{ij} ) \bs{\xi}_{ij} + \eta_{s,ij} (\Delta_{ij}) \bs{v}_{ij}^\perp ,    
}
where $\bs{v}_{ij}^\perp \coloneqq \mathcal P^\perp \qty ( \bs{r}_{ij} ) \cdot \bs{v}_{ij} - \qty ( R_i \bs{\omega}_i + R_j \bs{\omega}_j ) \times \hat{\bs{r}}_{ij}$ is the tangential relative velocity, and $k_{s,ij}$ and $\eta_{s,ij}$ are the tangential spring constant and damping coefficient, respectively.
The relative tangential displacement $\bs{\xi}_{ij}$ is determined by the following equation of motion:
\eq{
    \dv{\bs{\xi}_{ij}}{t} & = \bs{v}_{ij}^\perp - \frac{\hat{\bs{r}}_{ij}\bs{\xi}_{ij}}{r_{ij}} \cdot \bs{v}_{ij} , \label{tangential eom}
}
where the second term in the right-hand side is interpreted as an inertial force~\cite{Silbert2001Granular}, see Appendix~\ref{sec: tangential force} for details.
Eq.~\eqref{tangential eom} is integrated while particles $i$ and $j$ are in contact with an initial condition $\bs{\xi}_{ij} = 0$ at the moment of the contact formation.
When the contact is lost, $\bs{\xi}_{ij}$ is reset to zero.
The tangential displacement is the only source of the history dependence in this model.
With this tangential force, the pair torque is written as
\eq{
    \bs{T}_{ij} & \coloneqq \qty ( R_i - \frac{\Delta_{ij}}{2} ) \hat{\bs{r}}_{ij} \times \bs{F}_{ij}^\perp . \label{pair torque}
}
Note that to realistically model Coulomb friction the standard DEM often includes rolling and twisting torques in addition to the tangential one in Eq.~\eqref{pair torque}~\cite{Luding2008Cohesive,Santos2020Granular}.
We neglect these interactions to construct a minimal model for investigating the effects of friction.

\subsection{Sample preparation}\label{sec: sample preparation}

Several variants of the interaction exist in the DEM depending on the choice of the four functions $f_{n,ij}$, $\eta_{n,ij}$, $k_{s,ij}$, and $\eta_{s,ij}$.
In this study we adopted the linear spring-dashpot model for both normal and tangential forces
\eq{
    f_{n,ij} \qty ( \Delta_{ij} ) & = k_n \Delta_{ij} , \label{normal spring} \\
    \eta_{n,ij} \qty ( \Delta_{ij} ) & = \eta_n m_{\mr{eff},ij} ,\\
    k_{s,ij} \qty ( \Delta_{ij} ) & = k_s , \\
    \eta_{s,ij} \qty ( \Delta_{ij} ) & = \eta_s m_{\mr{eff},ij} ,
}
where $k_n$, $\eta_n$, $k_s$, and $\eta_s$ are positive constants, and $m_{\mr{eff},ij} \coloneqq m_im_j/(m_i+m_j)$ is the effective mass.
We set $m_i = m$ for all $i$ while the radius $R_i$ was treated as a random variable drawn from the uniform distribution on $[R, 1.2R)$ to prevent crystallization.
In this paper, we choose the units of mass, length, and time such that $m = 2R = \sqrt{m/k_n} = 1$, and set $k_s = 1$, $\eta_v = \eta_\omega = 1$, and $\eta_n = \eta_s = 0.5$.
The choice of $k_n$, $\eta_n$, $k_s$, and $\eta_s$ is the same as in Ref.~\cite{Santos2020Granular}, which simplifies the linear stability analysis as explained below.
All simulations in this study were conducted using Large-scale Atomic/Molecular Massively Parallel Simulator (LAMMPS)~\cite{Plimpton1995Fast,Thompson2022LAMMPS}.

To obtain inherent structures, we generated random configurations of particles at a sufficiently small packing fraction $\phi_0 \approx 0.1$\footnote{
To roughly estimate the packing fraction at this stage, we assumed that all particles had a radius $0.6$ and the actual value of $\phi_0$ was slightly smaller than $0.1$.
}.
It is often the case that particles in such completely random configurations overlap substantially, generating unrealistically large forces and torques.
To remove these overlaps, the frictional interactions were temporarily suppressed, $\eta_n = k_s = \eta_s = 0$, which enabled us to define a potential energy using Eq.~\eqref{normal spring}.
We then removed the overlaps by minimizing this potential energy.
We then activated the full interaction introduced above with specific values of $\mu$ and isotropically compressed these non-overlapping initial configurations to specific packing fractions $\phi$ over a time interval of $50$.
The numbers of particles $N$, friction coefficients $\mu$, and packing fractions $\phi$ used in this study are listed in Table~\ref{tab:samples}.
Note that the range of friction coefficients considered here spans nearly the entire physically relevant regime, from almost frictionless to strongly frictional systems~\cite{Santos2020Granular}.
The systems at these values have similar pressures, enabling us to compare packings with different friction coefficients at comparable pressure levels, see Fig~\ref{fig:pressure}.
After compression, we let the system evolve without any perturbations and stopped the simulations when the absolute value of the maximal force and torque component is below $10^{-10}$.
Since particles in the final states had small velocities and angular velocities $\lesssim 10^{-7}$, we set them to zero.
The value $10^{-7}$ therefore provides a rough estimate of the numerical precision; namely, values below this threshold should be regarded as zero numerically.
For each set of $N$, $\mu$, and $\phi$, we prepared $10$ samples though some of them unjammed after removing rattlers and pseudo-rattlers as explained in Sec.~\ref{sec: numerical results}.
For comparison we prepared packings of frictionless particles at $\phi=0.69$ with the steepest descent dynamics described in Ref.~\cite{Shimada2024Instantaneous}.

\begin{figure}
    \centering
    \includegraphics[width=0.4\textwidth]{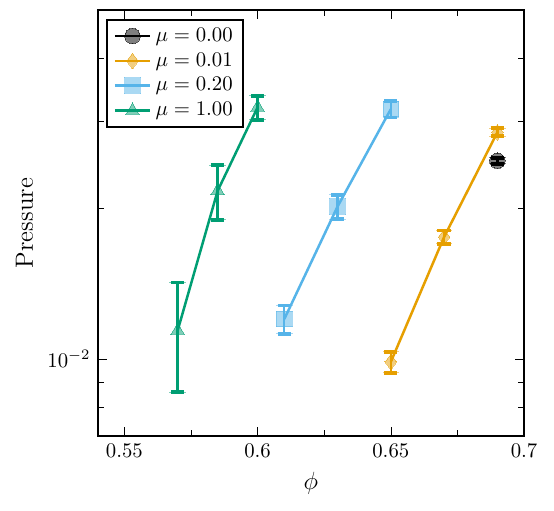}
    \caption{
    Pressure versus packing fraction for different friction coefficients.
    The system size is fixed at $N=128$.
    }
    \label{fig:pressure}
\end{figure}

\section{Theory}\label{sec: theory}

The theoretical analysis presented in this section is based on the interaction model introduced in Sec.~\ref{sec: discrete element method}.
The simple model introduced in Sec.~\ref{sec: sample preparation} is mainly intended for numerical simulations, and the theoretical structure discussed here does not rely on the specific choice of models.
Throughout the remainder of this paper, we strictly distinguish between the exact Coulomb friction law and the interaction model of the DEM.
In the former framework, the contact point between two particles remains fixed as long as the contact is in the static-friction regime. 
Once the transition to kinetic friction occurs, the contact begins to slip and the magnitude of the friction force is given by the kinetic friction coefficient multiplied by the normal contact force.
The latter, used in this study, is merely a simple model that approximates the former.

\subsection{Linearization}\label{sec: linearization}

The theoretical analysis begins with the linearization of the equations of motion.
Before proceeding to the detailed discussion, we define another state variable $\bs{\theta}_i$, which represents the orientation of particle $i$.
Since the orientation of a spherical particle does not affect the physical properties of the system, one does not need to consider the time evolution of the orientations in the previous section and Eq.~\eqref{angular momentum} is sufficient to describe the time evolution of the rotational degrees of freedom.
However, the \emph{relative} orientations of particles become relevant when analyzing deviations from a reference inherent structure and the time evolution of the orientation of particle $i$ is governed by
\eq{
    \dv{\bs{\theta}_i}{t} = \frac{\bs{L}_i}{J_i} . \label{orientation}
}
Note that Eq.~\eqref{orientation} is already an approximate form of the original equation of motion of the orientation under the assumption that the deviations from a reference orientation $\bs{\theta}_i = 0$ are infinitesimal, see Appendix~\ref{sec: orientation}.
From $\bs{\theta}_i$ and $\bs{\omega}_i$, we define the relative quantities for the rotational degrees of freedom
\eq{
    \bs{\chi}_{ij} & \coloneqq R_i \bs{\theta}_i + R_j \bs{\theta}_j , \\
    \bs{\nu}_{ij} & \coloneqq R_i \bs{\omega}_i + R_j \bs{\omega}_j .
}
Combining the four state variables $\bs{r}_i$, $\bs{\theta}_i$, $\bs{p}_i$, and $\bs{L}_i$, the full state vector measured from an inherent structure is given by
\eq{
    \bs{q} \coloneqq 
    \begin{pmatrix}
        \bs{r}^T - \bs{r}_0^T & \bs{\theta}^T & \bs{p}^T & \bs{L}^T
    \end{pmatrix}^T
    , \label{state vector}
}
where $\bs{r}_0$ is the positions of the particles at a given inherent structure.
The other state variables, $\bs{\theta}$, $\bs{p}$, $\bs{L}$, are zero at inherent structures.

Linearizing Eqs.~\eqref{position}--\eqref{angular momentum} and \eqref{orientation} around an inherent structure $\bs{q}=0$, we obtain
\eq{
    \dv{\bs{q}}{t} & = \mathcal A \cdot \bs{q} , \label{linear eom}
}
where the $12N$-dimensional Jacobian matrix $\mathcal A$ is given by
\eq{
    \mathcal A & \coloneqq 
    \begin{pmatrix}
        O_{6N} & M^{-1} \\
        H & \Gamma + \Gamma_{\mr{St}}
    \end{pmatrix} \label{Jacobian matrix}
}
with
\eq{
    M & \coloneqq 
    \begin{pmatrix}
        M_m    & O_{3N} \\
        O_{3N} & M_J
    \end{pmatrix}
    , \\
    H & \coloneqq 
    \begin{pmatrix}
        \pdv{\bs{F}}{\bs{r}} & \pdv{\bs{F}}{\bs{\theta}} \\[0.5em]
        \pdv{\bs{T}}{\bs{r}} & \pdv{\bs{T}}{\bs{\theta}} 
    \end{pmatrix}
    , \\
    \Gamma & \coloneqq 
    \begin{pmatrix}
        \pdv{\bs{F}}{\bs{p}} & \pdv{\bs{F}}{\bs{L}} \\[0.5em]
        \pdv{\bs{T}}{\bs{p}} & \pdv{\bs{T}}{\bs{L}} 
    \end{pmatrix}
    , \\
    \Gamma_{\mr{St}} & \coloneqq 
    \begin{pmatrix}
        \pdv{\bs{F}_{\mr{St}}}{\bs{p}} & O_{3N} \\[0.5em]
        O_{3N} & \pdv{\bs{T}_{\mr{St}}}{\bs{L}}         
    \end{pmatrix}
     = -
    \begin{pmatrix}
        \eta_v M_m^{-1} & O_{3N} \\
        O_{3N} & \eta_\omega M_J^{-1} 
    \end{pmatrix}
    .
}
The derivatives in $H$, $\Gamma$, and $\Gamma_{\mr{St}}$ are evaluated at a given inherent structure.
The matrix $-H$ is often called the dynamical matrix and central to the study of frictionless systems.
The mass matrix $M$ is decomposed into $3N$-dimensional matrices $M_m$ and $M_J$ defined as
\eq{
    M_m & \coloneqq
    \begin{pmatrix}
        m_1 I_3 && \\
        & \ddots & \\
        && m_N I_3
    \end{pmatrix}
    , \\
    M_J & \coloneqq
    \begin{pmatrix}
        J_1 I_3 && \\
        & \ddots & \\
        && J_N I_3
    \end{pmatrix}
    ,
}
where the off-diagonal elements are zero.

To further simplify the linearized equation of motion in Eq.~\eqref{linear eom} we change the state variable as
\eq{
    \bs{x} & \coloneqq \begin{pmatrix}
        M^{1/2} & O_{6N} \\
        O_{6N} & M^{-1/2}
    \end{pmatrix} \cdot \bs{q} . \label{change}
}
Eq.~\eqref{linear eom} is then rewritten as
\eq{
    \dv{\bs{x}}{t} & = \tilde{\mathcal A} \cdot \bs{x} , \label{normalized}
}
where
\eq{
    \tilde{\mathcal A} & \coloneqq 
    \begin{pmatrix}
        O_{6N} & I_{6N} \\
        \tilde H & \tilde \Gamma + \Gamma_{\mr{St}}
    \end{pmatrix} , \\
    \tilde H & \coloneqq M^{-1/2} \cdot H \cdot M^{-1/2} , \\
    \tilde \Gamma & \coloneqq M^{-1/2} \cdot \Gamma \cdot M^{1/2} .
}
Since $\Gamma_{\mr{St}}$ is diagonal, $M^{-1/2} \cdot \Gamma_{\mr{St}} \cdot M^{1/2} = \Gamma_{\mr{St}}$ holds.
We provide the explicit expressions of these matrices for the linear spring-dashpot model in Appendix~\ref{sec: derivatives of force and torque}.
Here we simply note that the matrix $H$, or equivalently $\tilde H$, is not symmetric, implying that the force field is non-conservative and therefore cannot generally be derived from a scalar potential.
The non-symmetric dynamical matrix arises from the asymmetry of the lever arm $\Delta_{ij} \neq \Delta_{ji}$ of the pair torque and the history-dependent tangential displacement $\bs{\xi}_{ij}$.
Although the former can be eliminated by changing the definition of the pair force or torque as in Refs.~\cite{Luding2008Cohesive,Chattoraj2019Oscillatory,Chattoraj2019Noise,Bonfanti2020Oscillatory} (see also Appendix~\ref{sec: minimal model} and \ref{sec: alternative torque}), the latter is inevitable as long as contact friction is modeled by a tangential spring.
When the system approaches jamming from above, both $\Delta_{ij}\to0$ and $\bs{\xi}_{ij}\to0$ hold and the dynamical matrix is exactly symmetric at jamming.
In addition, using the choice of the parameters in Sec.~\ref{sec: sample preparation}, the matrix $\Gamma$ becomes proportional to $H$ at jamming, which makes the following analysis quite simple~\cite{Koyama2025Enhanced}.

\subsection{Linear stability and mechanical energy}\label{sec: linear stability and mechanical energy}

If the real parts of all eigenvalues of the matrix $\tilde A$ are negative except for the trivial zero modes corresponding to the global translations, the linear dynamical system in Eq.~\eqref{normalized} is stable, meaning that the system returns to the original inherent structure $\bs{x} = 0$ starting from any initial conditions\footnote{
In this paper, the term ``stability'' refers to asymptotic stability and exponential stability, which are equivalent for linear autonomous systems~\cite{Khalil2001Nonlinear}.
}.
Since our system is closed and dissipative, the stability of the system amounts to the monotonically decreasing mechanical energy.
The time evolution of the total mechanical energy is given by
\eq{
    \dv{\tilde E(\bs{x})}{t} & = - \tilde R (\bs{x}) ,
}
where
\eq{
    \tilde E (\bs{x}) & \coloneqq \frac{1}{2} \bs{x} \cdot \tilde {\mathcal E} \cdot \bs{x} , \\
    \tilde R (\bs{x}) & \coloneqq \frac{1}{2} \bs{x} \cdot \tilde {\mathcal R} \cdot \bs{x} ,
}
with
\eq{
    \tilde{\mathcal E} & \coloneqq 
    \begin{pmatrix}
        - \tilde H_{\mr{sym}} & O_{6N} \\
        O_{6N} & I_{6N} 
    \end{pmatrix} , \\
    \tilde{\mathcal R} & \coloneqq
    \begin{pmatrix}
        O_{6N} & \tilde H_{\mr{skew}} \\
        - \tilde H_{\mr{skew}} & - 2 \qty ( \tilde \Gamma_{\mr{sym}} + \Gamma_{\mr{St}} )
    \end{pmatrix}
    .
}
In this paper, the matrices $\tilde{\mathcal E}$ and $\tilde{\mathcal R}$ are called the energy and dissipation matrices, respectively.
The blocks $-\tilde H_{\mr{sym}}$ and $I_{6N}$ in the energy matrix define the normalized potential and kinetic energies, respectively.
The block $\tilde \Gamma_{\mr{sym}} + \Gamma_{\mr{St}}$ in the dissipation matrix corresponds to the dissipation due to the dashpot interactions while $\tilde H_{\mr{skew}}$ results from the non-symmetric dynamical matrix.

\begin{figure}
    \centering
    \includegraphics[width=0.4\textwidth]{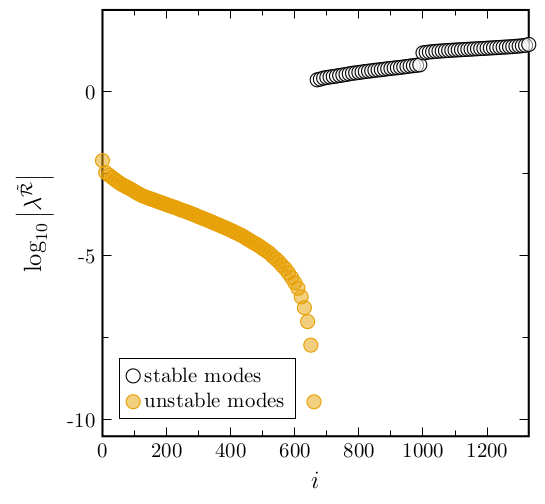}
    \caption{
    Eigenvalues of the dissipation matrix for a sample with $\mu=1.00$, $\phi=0.600$, and $N=128$. 
    The stable (unstable) modes satisfy $\lambda^{\tilde{\mathcal R}}_i > 0$ ($\lambda^{\tilde{\mathcal R}}_i < 0$).
    }
    \label{fig:R_eigvals}
\end{figure}

Mathematically it follows from the corollary in Appendix~\ref{sec: technical lemmas} that the dissipation matrix is indefinite as long as $\tilde H_{\mr{skew}} \neq O_{6N}$, and the system can spontaneously increase its energy via negative modes of the dissipation matrix.
This behavior is expected because non-symmetric dynamical matrices are characteristic of systems subject to non-conservative driving forces such as the Ziegler pendulum and the skew-symmetric part $\tilde H_{\mr{skew}}$ is proportional to an external driving force~\cite{Kirillov2013Nonconservative}. 
However, this is clearly at odds with the expectation that the DEM is a model of the Coulomb friction law since, as discussed above, it is required that $\tilde R(\bs{x})\geq0$ for all $\bs{x}$, which is equivalent to requiring that the dissipation matrix $\tilde{\mathcal R}$ is positive semidefinite.
Since the skew-symmetric part $\tilde H_{\mr{skew}}$ results from the history-dependent tangential displacement and the asymmetric lever arm of the pair force, both of which are peculiar to the DEM, this spurious instability is an artifact of the DEM.

This interpretation is also consistent with the phenomenon known as granular ratcheting~\cite{McNamara2008Microscopic} in granular packings under cyclic loading.
In Ref.~\cite{McNamara2008Microscopic}, it was shown that granular ratcheting can be eliminated by effectively removing the history dependence of the tangential spring.
This observation is closely related to the present instability: as long as the tangential spring retains its history dependence, $\tilde H_{\mr{skew}}$ remains nonzero, and negative modes of the dissipation matrix cannot in general be eliminated.

Fig.~\ref{fig:R_eigvals} illustrates these negative modes for a sample with $\mu=1.00$, $\phi=0.600$, and $N=128$ by showing the eigenvalues of the dissipation matrix as a function of their index.
The eigenvalues are sorted in ascending order $\lambda_1^{\tilde {\mathcal R}}<\lambda_2^{\tilde {\mathcal R}}< \cdots,$ and the stable (unstable) modes satisfy $\lambda^{\tilde{\mathcal R}}_i > 0$ ($\lambda^{\tilde{\mathcal R}}_i < 0$).
Fig.~\ref{fig:R_eigvals} indeed demonstrates that the matrix $\tilde R$ possesses both positive and negative eigenvalues.
Note that pseudo-rattlers are already removed in this sample, see Sec.~\ref{sec: numerical results}.

The spurious instability of the dissipation matrix renders the low-frequency, or low-energy, data obtained from the linear stability analysis unreliable even in the vicinity of jamming, where $\tilde H_{\mr{skew}}$ becomes small.
As shown in Ref.~\cite{Shimada2018Anomalous} using the Lennard-Jones (LJ) potential, the low-frequency modes are highly susceptible to small perturbations in the dynamical matrix.
It was shown that the low-frequency localized modes completely disappear due to the spurious instability resulting from the non-smooth LJ potential.
The spurious instability in this case is still manageable and it is possible to obtain the correct low-frequency modes by smoothing the LJ potential.
In the case of the DEM, however, there are no simple modifications to eliminate the spurious instability, to the best of our knowledge.
Since the low-frequency modes govern how the system converges to an inherent structure and how plastic events are triggered under shear, this instability hinders the investigation of such phenomena using the DEM.
In Appendix~\ref{sec: minimal model}, we provide simple examples that illustrate the spurious instability.

In light of these considerations, we focus on two topics in the following section.
After introducing the observables measured in our numerical analysis in Sec.~\ref{sec: observables}, we define rattlers and pseudo-rattlers in Sec.~\ref{sec: energy matrix and rattlers} and then explore the intermediate- and high-frequency regimes of the density of states of the energy matrix $\tilde{\mathcal E}$ in Sec.~\ref{sec: density of states}.
To investigate $\tilde {\mathcal E}$, we treat its only nontrivial block $\tilde H_{\mr{sym}}$, which we also refer to as the energy matrix.
In the study of frictionless systems, it is well known that the spectrum of $\tilde H_{\mathrm{sym}}$ determines the complex modulus and characterizes their linear viscoelasticity~\cite{Lemaitre2006Sum}.
Note that the dissipation matrix is also necessary for investigating the viscoelastic response in general.
However, since $\tilde \Gamma_{\mr{sym}}$ is approximately proportional to $\tilde H_{\mr{sym}}$ in the vicinity of jamming and $\tilde H_{\mr{skew}}$ only yields the spurious instability, we focus on $\tilde H_{\mr{sym}}$.
A detailed study of the quantitative differences between $\tilde \Gamma_{\mr{sym}}$ and $\tilde H_{\mr{sym}}$ is left for future work.

Finally, we briefly mention the stability of the linear dynamical system in Eq.~\eqref{normalized}.
Even though the mechanical energy $\tilde E(\bs{x})$ does not monotonically decrease due to the spurious instability, the system is still stable as long as the real parts of the eigenvalues of the matrix $\tilde {\mathcal A}$ are negative.
This stability is interpreted physically as follows: the system returns to the origin $\bs{x} = 0$ because the energy injection and dissipation due to the negative and positive modes of the dissipation matrix $\tilde R$ balance, as in the case of stationary states of externally driven systems.

\section{Numerical results}\label{sec: numerical results}

\begin{figure}
    \centering

    \begin{minipage}{0.32\linewidth}
        \centering
        \includegraphics[width=\linewidth]{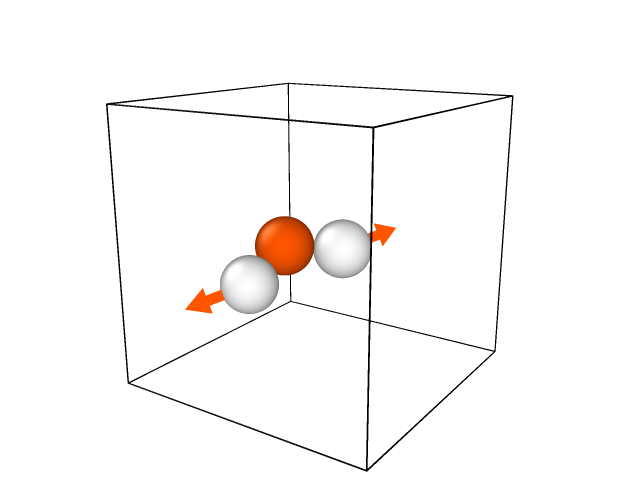}
    \end{minipage}
    \hspace{-0.05\linewidth}
    \begin{minipage}{0.32\linewidth}
        \centering
        \includegraphics[width=\linewidth]{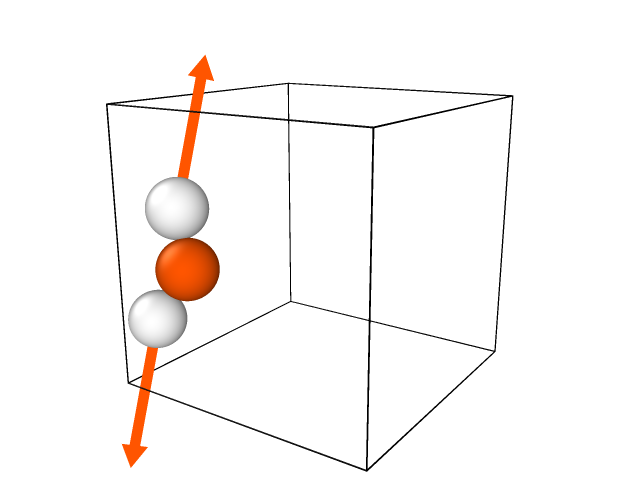}
    \end{minipage}
    \hspace{-0.05\linewidth}
    \begin{minipage}{0.32\linewidth}
        \centering
        \includegraphics[width=\linewidth]{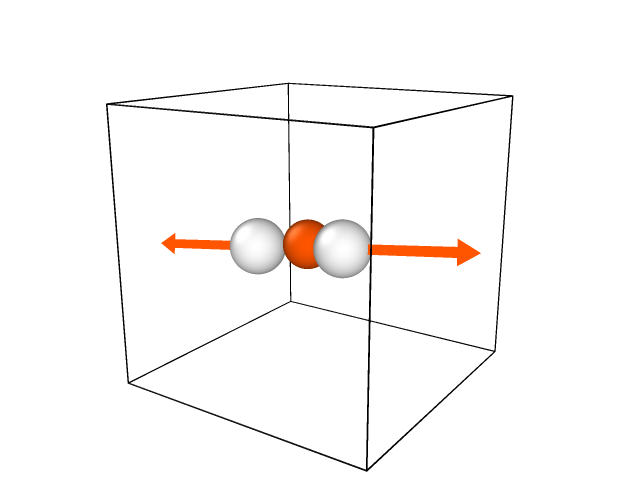}
    \end{minipage}
    \caption{
    Three pseudo-rattlers (orange) observed in a sample with $N=128$, $\mu=1.00$, and $\phi=0.600$.
    The orange arrows indicate the forces exerted by the pseudo-rattlers on their neighbor particles (white).
    The two forces in each panel have equal magnitude and opposite directions due to force balance.
    For visibility, the force vectors were magnified by factors of 5000, 100, and 2000 in the left, center, and right panels, respectively.
    }
    \label{fig:rattlers}
\end{figure}

\subsection{Observables}\label{sec: observables}

The density of states of the energy matrix $\tilde H_{\mr{sym}}$ is defined as
\eq{
    D (\omega) & \coloneqq \frac{1}{6N_\#-3} \sum_{i=1}^{6N_\#-3} \delta \qty ( \omega - \sqrt{|\lambda^{\tilde H_{\mr{sym}}}_i|} ) , \label{spectrum}
}
where $\delta$ is the Dirac delta function and $N_\#$ is the number of particles in a packing after removing rattlers and pseudo-rattlers as explained in the next subsection, and the factor of three corresponds to the trivial modes of global translations.
Also, since $\lambda^{\tilde H_{\mr{sym}}}_i<0$ for all $i$ after removing rattlers and pseudo-rattlers, we use the eigenfrequency $\omega_i \coloneqq \sqrt{|\lambda_i^{\tilde H_{\mr{sym}}}|}$, following the convention of the study of frictionless systems~\cite{Lerner2021Low-energy}.

We split each eigenvector $\bs{e}_i^{\tilde H_{\mr{sym}}}$ into two halves corresponding to the translational and rotational degrees of freedom
\eq{
    \bs{e}_i^{\tilde H_{\mr{sym}}} & \eqqcolon \begin{pmatrix}
        \bs{e}_i^{(r)} \\
        \bs{e}_i^{(\theta)}
    \end{pmatrix}
    .
}
The weight of each component is defined as $W_i^{(a)} \coloneqq |\bs{e}_i^{(a)}|^2$, which satisfies $W_i^{(r)} + W_i^{(\theta)} = 1$ by normalization. 
The participation ratio~\cite{Schober1991Localized} is defined as
\eq{
    P_i^{(a)} \coloneqq \frac{\qty ( \sum_{j=1}^{N_\#} | \bs{e}_{i,j}^{(a)} |^2 )}{N_\# \sum_{j=1}^{N_\#} | \bs{e}_{i,j}^{(a)} |^4} ,
}
where $a = r, \theta$.
It quantifies the degree of localization of a vector, and equals to one if the vector is completely extended over the system and $1/N_\#$ if the vector is localized on a single particle.

Given a matrix $A$ and its eigenvector $\bs{e}_i^A$, we also define the contribution of particles with a contact number $z$ to $\bs{e}_i^A$
\eq{
    w_i^A (z) \coloneqq \sum_{j=1}^{N_\#} \delta_{z_j,z} | \bs{e}_{i,j}^A |^2 ,
}
where $z_j$ is the contact number of particle $j$.
In the following subsection, we use $w_i^A(2)$ for $A = \tilde{\mathcal A}, \tilde{\mathcal R}$, and $\tilde H_{\mr{sym}}$ to discuss instability associated with particles with two contacts.

\subsection{Rattlers and pseudo-rattlers}\label{sec: energy matrix and rattlers}

Rattlers are particles that do not participate in the rigid structure of a jammed packing and whose removal does not affect the stability of the remaining system.
Therefore, particles with no contacts are obviously rattlers. 
Particles with only a single contact are also rattlers.
Due to the force and torque balance, such particles merely touch their neighbor without being constrained by the rigid structure of the system.

The situation differs between frictionless and frictional systems for particles with two contacts. 
While they are still rattlers in the case of frictionless systems, particles with two contacts are generally unable to move or rotate freely in any direction under the exact Coulomb friction law, i.e., they are mechanically constrained.
This is simply understood by counting the number of degrees of freedom and constraints exerted by contacts~\cite{Santos2020Granular}.
In fact, even in our simulations using the DEM, we observed several particles with two contacts that carry finite force and torque, see Fig.~\ref{fig:rattlers}.
Hereafter, we refer to particles with two contacts as pseudo-rattlers to distinguish them from true rattlers.
The numerical data in this subsection were obtained after removing only true rattlers.

\begin{figure}
    \centering

    \begin{minipage}{0.32\linewidth}
        \centering
        \includegraphics[width=\linewidth]{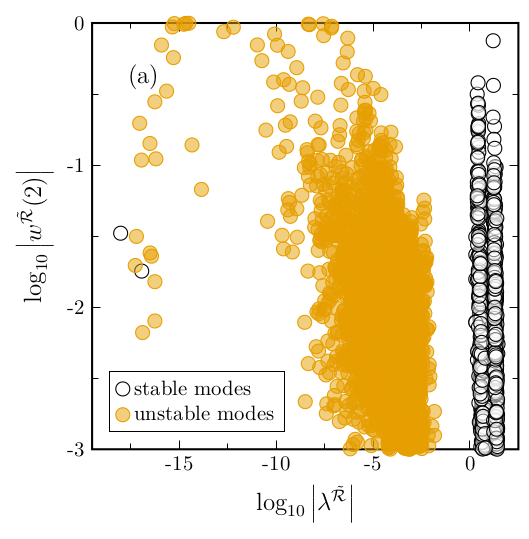}
    \end{minipage}
    \begin{minipage}{0.32\linewidth}
        \centering
        \includegraphics[width=\linewidth]{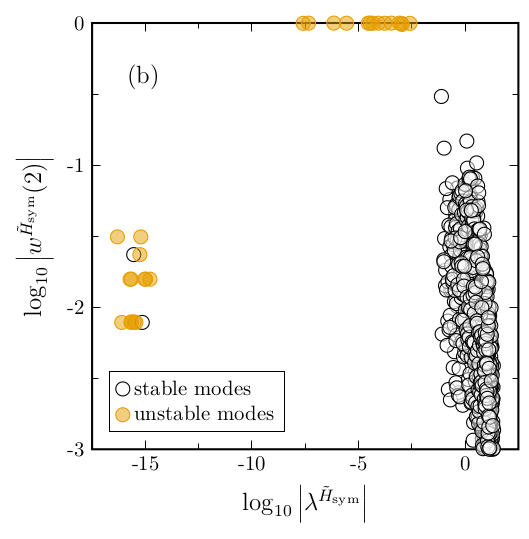}
    \end{minipage}
    \begin{minipage}{0.32\linewidth}
        \centering
        \includegraphics[width=\linewidth]{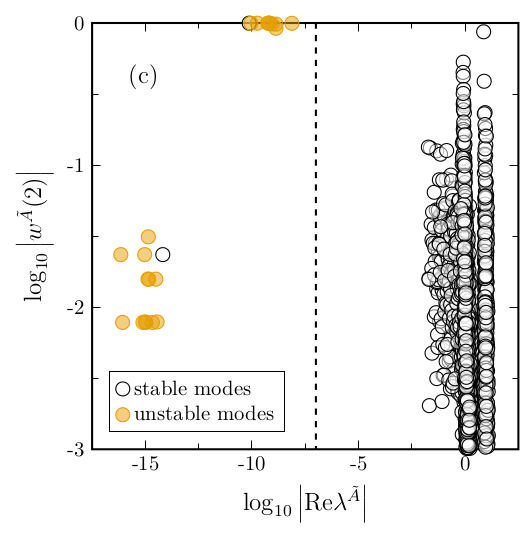}
    \end{minipage}
    \caption{
    Scatter plots of $w^A(2)$ versus $\lambda^A$ for $A=\tilde{\mathcal R}, \tilde H_{\mr{sym}},$ and $\tilde A$ in panels (a), (b), and (c), respectively.
    The $x$-axis in panel (c) shows the real parts of the eigenvalues of $\tilde A$, which are generally complex.
    These results were obtained using 10 samples with $N=128$, $\mu=1.00$, and $\phi=0.600$.
    For better visibility, only one out of every five data points is shown for the stable modes.
    The dashed line in panel (c) indicates the rough estimate of the numerical precision.
    The definitions of stable and unstable modes are different for different matrices; see the main text for details.
    }
    \label{fig:rattler weight}
\end{figure}

Although pseudo-rattlers appear to be stably constrained by the rigid structure, they give rise to unstable potential energy landscapes in the DEM and thus must be removed to obtain stable solid-like structures.
Fig.~\ref{fig:rattler weight} shows scatter plots of $w_i^A(2)$ versus $\lambda_i^A$ for $A = \tilde{\mathcal R}, \tilde H_{\mr{sym}},$ and $\tilde A$ in panels (a), (b), and (c), respectively.
Since the eigenvalues of $\tilde A$ are generally complex, we plot their real parts on the $x$-axis in panel (c).
The dashed line in panel (c) indicates the rough estimate of the numerical precision mentioned in Sec.~\ref{sec: sample preparation}.
The definitions of stable and unstable modes differ for the three matrices:
For $A=\tilde{\mathcal R}$, the stable (unstable) modes satisfy $\lambda^{\tilde{\mathcal R}}>0$ ($\lambda^{\tilde{\mathcal R}}<0$).
For $A=\tilde H_{\mr{sym}}$, the stable (unstable) modes satisfy $\lambda^{\tilde H_{\mr{sym}}}<0$ ($\lambda^{\tilde H_{\mr{sym}}}>0$).
For $A=\tilde A$, the stable (unstable) modes satisfy $\Re\lambda^{\tilde A}<0$ ($\Re\lambda^{\tilde A}>0$).

For the purpose of discussion we divide the data points in Fig.~\ref{fig:rattler weight}(a) into four groups.
The first group consists of modes with eigenvalues close to zero $\lesssim 10^{-10}$.
The second group consists of unstable modes strongly localized on pseudo-rattlers $w_i^{\tilde{\mathcal R}}(2) \sim 1$.
The third and fourth groups consist of unstable and stable modes, respectively, both characterized by relatively small values of $w_i^{\tilde{\mathcal R}}(2)$.
The first group contains zero modes other than the trivial zero modes corresponding to the global translations.  
The second group indicates that the motions of pseudo-rattlers are largely governed by the spurious instability as discussed in the previous section.
The origin of the spurious instability is not limited to pseudo-rattlers because the matrix $\tilde{\mathcal R}$ also possesses unstable modes that are unrelated to pseudo-rattlers and belong to the third group.

Fig.~\ref{fig:rattler weight}(b) shows results similar to those in panel (a) except that the third group is absent, indicating that the unstable modes of $\tilde H_{\mr{sym}}$ originate solely from pseudo-rattlers.
Since $-\tilde H_{\mr{sym}}$ defines the potential energy in $\tilde {\mathcal E}$, its unstable modes correspond to saddle directions at a given inherent structure.
Such saddle modes in the potential energy landscape, however, are typically associated with liquid structures~\cite{Angelani2000Saddles} and considered to be another artifact of the DEM.
In Appendix~\ref{sec: two walls} we show that the motion of a pseudo-rattler can be highly unstable due to both the spurious instability of $\tilde R$ and saddle modes of $-\tilde H_{\mr{sym}}$.

The results in Fig.~\ref{fig:rattler weight}(c) are almost identical to those in panel (b), except that the absolute values of the eigenvalues in the second group $\sim 10^{-8}$ are smaller than in panel (b).
The eigenvalues in the second group lie below the threshold indicated by the dashed line. 
As a result, these unstable modes are regarded as stable numerically and the system therefore appears to have converged to an inherent structure.
This stabilization is attributed to the strong damping effect of the Stokes-type interactions $\Gamma_{\mr{St}}$.

To verify this claim, Fig.~\ref{fig:rattler weight no stokes} shows the same data as Fig.~\ref{fig:rattler weight}, but computed without $\Gamma_{\mr{St}}$.
Since $\Gamma_{\mr{St}}$ does not affect $\tilde H_{\mr{sym}}$, the corresponding results to Fig.~\ref{fig:rattler weight}(b) are not shown.
Fig.~\ref{fig:rattler weight no stokes}(a) shows that the number of unstable modes of $\tilde{\mathcal R}$ increased significantly when $\Gamma_{\mr{St}}$ is removed.  
Fig.~\ref{fig:rattler weight no stokes}(b) shows that the absolute values of the unstable modes of $\tilde A$ are larger than those in Fig.~\ref{fig:rattler weight}(c) and, in particular, the threshold value indicated by the dashed line.
Thus the motions of pseudo-rattlers are essentially unstable in the DEM and only marginally stabilized by the dissipative interactions $\Gamma_{\mr{St}}$, implying that pseudo-rattlers need to be removed in the same way as rattlers\footnote{
The instability of pseudo-rattlers is considered to be responsible for the slow convergence to inherent structures when the dissipative interactions $\Gamma_{\mr{St}}$ are suppressed as mentioned in Sec.~\ref{sec: model and methods}.
}.
In Appendix~\ref{sec: alternative torque}, we describe a simpler alternative definition of the pair torque in Eq.~\eqref{pair torque} and present numerical results demonstrating that the instability associated with pseudo-rattlers remains unchanged under this alternative definition.

As shown in Fig.~\ref{fig:rattlers}, however, each contact of pseudo-rattlers carries a finite force and torque and simply removing such particles invalidates the force and torque balance.
Thus, in our simulation, we removed all pseudo-rattlers, let the system relax until force and torque balance was restored, and repeated this procedure until each particle had more than two contacts.
Although this procedure leaves us with mechanically stable structures suitable for analyzing the spectrum of the energy matrix, this comes at the cost of causing some initially jammed configurations to become unjammed for certain values of $N$, $\mu$, and $\phi$ because the algorithm adopted here is stricter than the standard rattler-removal procedure implemented, for example, in LAMMPS.
See also Table \ref{tab:samples}.
Note that systems with larger friction coefficients tend to have more pseudo-rattlers because larger tangential forces are allowed for larger $\mu$.

\begin{figure}
    \centering

    \begin{minipage}{0.32\linewidth}
        \centering
        \includegraphics[width=\linewidth]{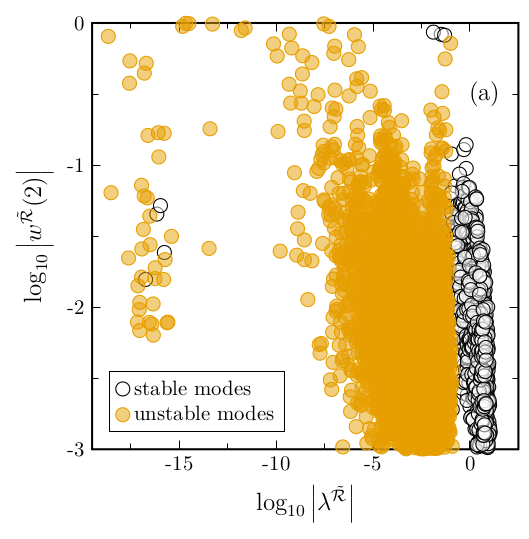}
    \end{minipage}
    \begin{minipage}{0.32\linewidth}
        \centering
        \includegraphics[width=\linewidth]{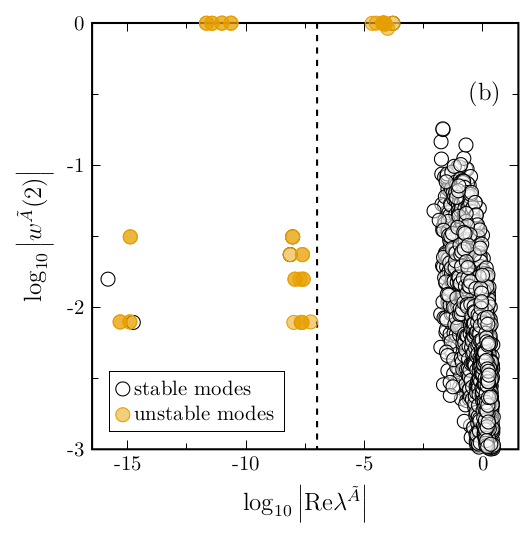}
    \end{minipage}
    \caption{
    Scatter plots of $w^A(2)$ versus $\lambda^A$ for $A=\tilde{\mathcal R}$ and $\tilde A$ in panels (a) and (b), respectively. 
    Panels (a) and (b) correspond to Fig.~\ref{fig:rattler weight}(a) and (c), respectively, except that the contribution of $\Gamma_{\mr{St}}$ is neglected.
    }
    \label{fig:rattler weight no stokes}
\end{figure}

\subsection{Energy matrix}\label{sec: density of states}

Here we investigate the spectrum of the energy matrix $\tilde H_{\mr{sym}}$ after removing rattlers and pseudo-rattlers.
Fig.~\ref{fig:spectra} shows its density of states $D(\omega)$.
We show the results for $N=1024$ and $\mu=0.01, 0.20$, and $1.00$ in panels (a), (b), and (c), respectively, at several values of $\phi$.
In panel (a), the density of states for $\mu=0.00$ and $\phi=0.69$ is also shown for comparison.
The density of states of frictional systems is shifted to the right relative to that of frictionless systems because the additional tangential interactions increase the energy scale.
From Fig.~\ref{fig:spectra}, three frequency regimes can be identified: low-, intermediate-, and high-frequency regimes.
The density of states exhibits a plateau $D(\omega) \sim \mr{const.}$ in the intermediate-frequency regime and this plateau extends to lower frequencies for smaller $\phi$ and larger $\mu$, similarly to frictionless systems.
In the high- and low-frequency regimes, the density of states decays rapidly while the latter appears noisier due to phonon modes as discussed below.
Overall, the qualitative behavior in the intermediate- and high-frequency regimes depends only weakly on $\phi$ and $\mu$.
The similarity between the results for the frictionless and frictional systems indicates that these systems have similar complex moduli and, consequently, exhibit similar viscoelastic responses~\cite{Lemaitre2006Sum,Tighe2011Relaxations,Hara2025Alink,Koyama2025Enhanced}.

To discuss the structure of the eigenmodes, Fig.~\ref{fig:mu_020} shows scatter plots of $W^{(r)}$, $P^{(r)}$, and $P^{(\theta)}$ versus $\omega$ in panels (a), (b), and (c), respectively, for $N=1024$ and $\mu=0.20$.
In the high-frequency regime, $W^{(r)} \ll W^{(\theta)} = 1-W^{(r)}$ holds, meaning that the modes are dominated by rotational degrees of freedom.
From panels (b) and (c), these modes are found to be strongly localized on a few particles.
In the intermediate-frequency regime, the weight $W^{(r)}\sim0.6$ and participation ratios $P^{(r)}, P^{(\theta)}\sim0.5$ are independent of $\omega$.
Both translational and rotational degrees of freedom contribute to the modes almost equally and these modes are relatively extended over the system.
It is noteworthy that the density of states exhibits a plateau in this regime, as in frictionless systems, despite the additional rotational degrees of freedom introduced by friction.
Fig.~\ref{fig:mu_020} demonstrates that these intermediate- and high-frequency properties are qualitatively independent of the packing fraction.
These modes are also qualitatively independent of the friction coefficient as shown in Appendix~\ref{sec: additional numerical results}.
We note that the frequencies of the localized modes dominated by rotational degrees of freedom change depending on the ratio $k_s/k_n$~\cite{Ishima2023Theory}.

In the low-frequency regime, one can identify a few discrete bands of eigenmodes whose weight $W^{(r)}$ is almost one and participation ratio of the translational degrees of freedom $P^{(r)}$ is approximately $0.7$.
These features are characteristic to phonon modes as discussed in previous studies of frictionless systems~\cite{Lerner2021Low-energy}.
In addition, several localized modes coexist with these phonon modes, which is also a characteristic of frictionless systems~\cite{Lerner2021Low-energy}.
As discussed in the previous section, however, the low-frequency data should be treated with caution due to the spurious instability and we cannot determine whether the low-frequency properties of $\tilde H_{\mr{sym}}$ are indeed similar to those of frictionless systems in the case of true Coulomb friction.

\begin{figure}
    \centering

    \begin{minipage}{0.32\linewidth}
        \centering
        \includegraphics[width=\linewidth]{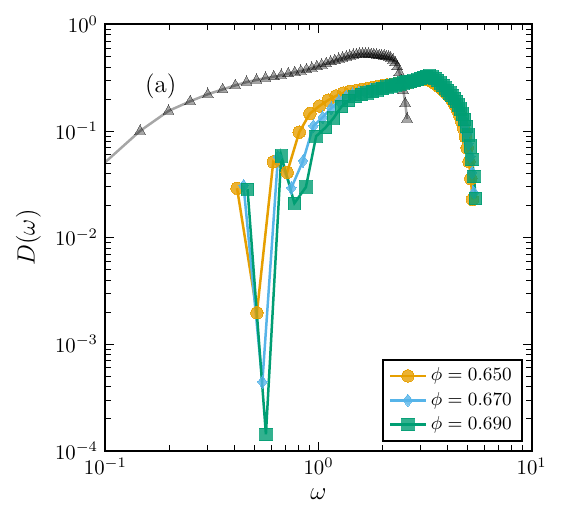}
    \end{minipage}
    \begin{minipage}{0.32\linewidth}
        \centering
        \includegraphics[width=\linewidth]{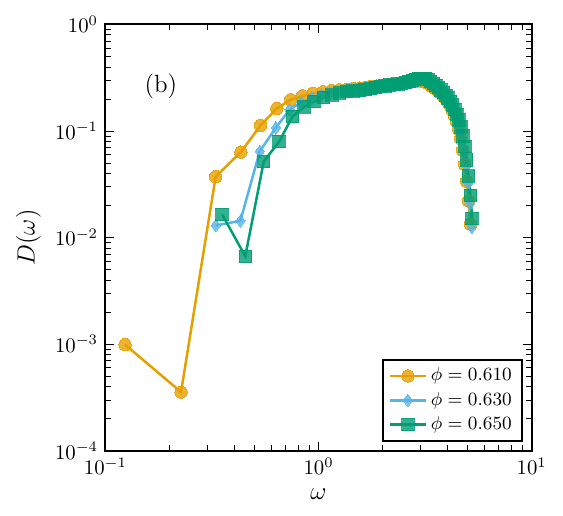}
    \end{minipage}
    \begin{minipage}{0.32\linewidth}
        \centering
        \includegraphics[width=\linewidth]{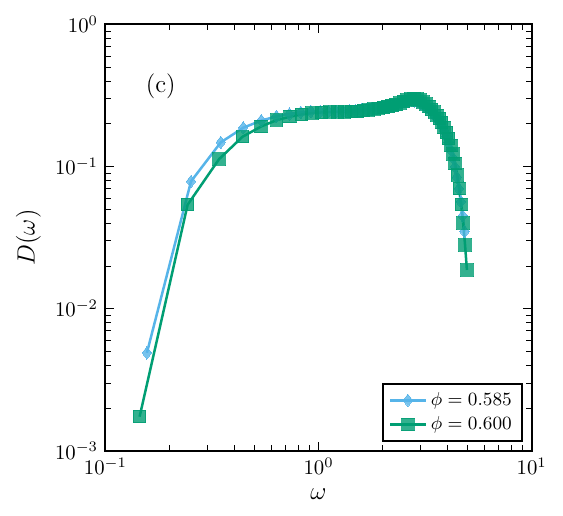}
    \end{minipage}
    \caption{
    Densities of states of the energy matrix $\tilde H_{\mr{sym}}$ for $\mu=0.01, 0.20,$ and $1.00$ in panels (a), (b), and (c), respectively.
    In panel (a), the density of states for $\mu=0.00$ and $\phi=0.69$ is also shown as black triangles for comparison.
    These results were obtained using samples with $N=1024$.
    }
    \label{fig:spectra}
\end{figure}

\section{Discussion}\label{sec: discussion}

As discussed in Secs.~\ref{sec: theory} and \ref{sec: numerical results}, the DEM suffers from several artifacts that hinder a detailed investigation of the linear viscoelastic response, particularly in the low-frequency regime. 
Here, we briefly discuss three possible alternative approaches to modeling Coulomb friction to overcome this difficulty and describe the advantages and disadvantages of each. 
First, one can consider an effective potential instead of modeling contact friction with history-dependent interactions~\cite{Liu2021Spongelike}. 
This approach is quite convenient because it allows us to utilize methods developed for frictionless systems. 
However, by construction, it cannot describe phenomena that intrinsically rely on the history dependence of contact friction.

Second, one can model contact friction by introducing internal degrees of freedom into each particle. 
One simple example is to introduce a number of small patches on the surface of each particle, and patches on different particles attract each other when they are sufficiently close.
Such models have already been proposed in the context of colloidal systems~\cite{Immink2019Using}.
However, realistically modeling Coulomb friction would require a large number of patches, making large-scale simulations computationally demanding.

Finally, one can completely abandon smooth interactions and instead adopt the nonsmooth contact dynamics method~\cite{Jean1999The}, which treats Coulomb friction more rigorously than the DEM by solving the resulting system of inequalities. 
Because this framework enforces the exact Coulomb friction conditions, it can naturally account for the history-dependent nature of contact friction without introducing any artifacts. 
However, because the interactions in this framework are inherently nondifferentiable, linear stability analysis based on the Jacobian matrix cannot be directly applied.

\section{Summary and conclusion}\label{sec: summary}

In this work, we investigated the linear stability of frictional granular systems modeled by the DEM.
By linearizing the equations of motion, we derived the energy and dissipation matrices from the Jacobian matrix of the forces and torques, providing a framework for analyzing the linear response of the granular system.
In particular, these matrices control the viscoelastic response of the system and are therefore relevant to rheological measurements.
By analyzing these matrices, we identified two artifacts of the DEM that hinder a physically meaningful characterization of the low-frequency dynamics.

First, the dissipation matrix inevitably possesses negative modes, indicating spurious energy injection through such unstable modes. 
This behavior is unique to externally driven systems, and is clearly inconsistent with the interpretation of the DEM as a model of Coulomb friction, which should only dissipate energy. 
Second, particles with only two contacts, referred to here as pseudo-rattlers, generate negative modes in the energy matrix, implying that the system is located at a saddle point of the potential energy landscape. 
These artifactual instabilities are only marginally stabilized in the full Jacobian matrix by Stokes-type dissipative interactions.

Taking these artifacts into account, we examined the eigenmodes of the energy matrix. 
Analysis of several fundamental observables, including the density of states, revealed that the intermediate- and high-frequency regimes are qualitatively insensitive to both the friction coefficient and the packing fraction. 
In particular, the intermediate-frequency regime exhibits a plateau reminiscent of that observed in frictionless granular systems~\cite{Silbert2005Vibrations}, despite the contribution of rotational degrees of freedom.
The low-frequency regime likewise shares several features with frictionless systems, including discretized phonon bands and localized modes. 
However, since the influence of the aforementioned artifacts on this regime remains unclear, these observations should be interpreted with caution.

Finally, we discussed several alternative simulation frameworks that do not rely on virtual tangential springs to model contact friction.
At present, however, none of them appears to be fully capable of describing the linear viscoelastic response of frictional granular systems.
Developing theoretical and computational approaches that overcome the limitations of existing frameworks remains an important challenge for future research of frictional granular systems.

\begin{figure}
    \centering

    \begin{minipage}{0.32\linewidth}
        \centering
        \includegraphics[width=\linewidth]{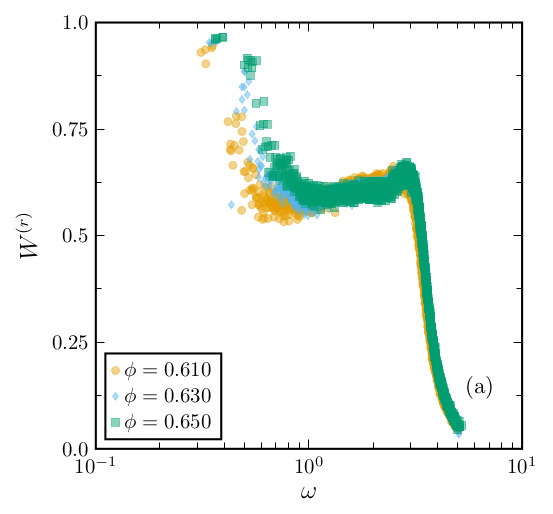}
    \end{minipage}
    \begin{minipage}{0.32\linewidth}
        \centering
        \includegraphics[width=\linewidth]{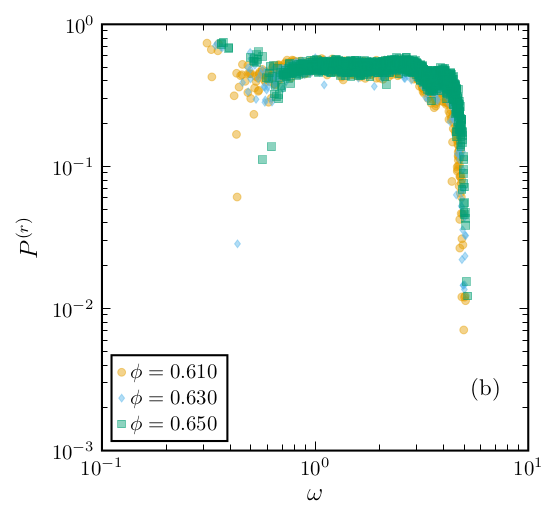}
    \end{minipage}
    \begin{minipage}{0.32\linewidth}
        \centering
        \includegraphics[width=\linewidth]{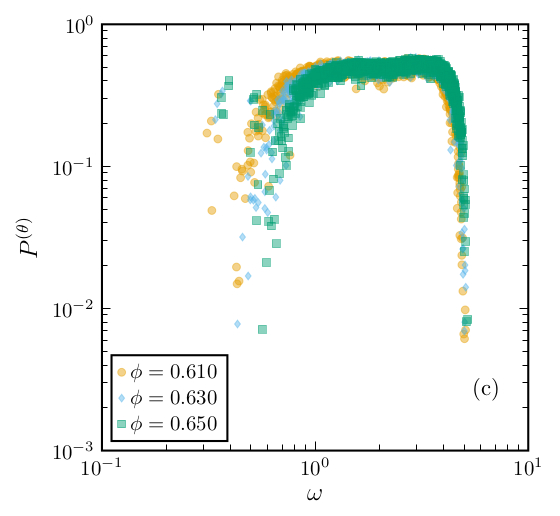}
    \end{minipage}
    \caption{
    Scatter plots of $W^{(r)}$, $P^{(r)}$, and $P^{(\theta)}$ versus $\omega$ in panels (a), (b), and (c), respectively.
    These results were obtained using 10 samples with $N=1024$ and $\mu=0.20$.
    }
    \label{fig:mu_020}
\end{figure}

\section*{Conflict of interest}

The authors declare no conflicts of interest.

\section*{Acknowledgment}

We thank Norihiro Oyama, Atsushi Ikeda, and Yuji Ito for their insightful comments.
In this research work, we used the ``mdx: a platform for building data-empowered society''~\cite{Suzumura2022mdx}.

\section*{Declaration of generative AI and AI-assisted technologies in the manuscript preparation process}

During the preparation of this work, the authors used Open AI ChatGPT (GPT-5) in order to assist with code implementation and to improve the language and clarity of the manuscript.
After using this tool/service, the authors reviewed and edited the content as needed and take full responsibility for the content of the published article.

\appendix

\section{Mathematical preliminaries}\label{sec: mathematical preliminaries}

\subsection{Notation}\label{sec: notation}

Let $I_D$ denote the $D$-dimensional identity matrix, and let $O_{D\times D'}$ denote the $D\times D'$ zero matrix.
In particular $O_D \coloneqq O_{D\times D}$ denotes the $D$-dimensional square zero matrix.
The symmetric and skew-symmetric parts of a square matrix $A$ are defined as  
\eq{
    A_{\mr{sym}} \coloneqq \frac{A + A^T}{2} , \\
    A_{\mr{skew}} \coloneqq \frac{A - A^T}{2} .
}
In the main text, we often define a three-dimensional vector associated with particle $i$ such as the position vector $\bs{r}_i = ( x_i\,\,\, y_i\,\,\, z_i )^T$, where $\bullet^T$ denotes the transpose, and collecting $N$ such three-dimensional vectors we write
\eq{
    \bs{r} & \coloneqq 
    \begin{pmatrix}
        \bs{r}_1^T & \cdots & \bs{r}_N^T
    \end{pmatrix}^T .
}
Tensor contractions are explicitly indicated by dots.
For example, given a vector $\bs{x}$ and matrices $A$ and $B$, we write
\eq{
    [A \cdot \bs{x}]_\mu & = \sum_\nu A_{\mu\nu} x_\nu , \\
    [\bs{x} \cdot A]_\mu & = \sum_\nu x_\nu A_{\nu\mu} , \\
    [A \cdot B]_{\mu\nu} & = \sum_\lambda A_{\mu\lambda} B_{\lambda\nu} ,
}
where $\mu$ and $\nu$ are indices of the tensors and $\mu$ should not be confused with the friction coefficient.
The $i$-th eigenvalue and eigenvector of a matrix $A$ are written as
\eq{
    A \cdot \bs{e}_i^A = \lambda^A_i \bs{e}^A_i ,
}
where the eigenvectors are normalized $|\bs{e}_i^A|=1$.

All tensors in the remainder of this subsection are three-dimensional.
The cross product $\times$ is defined as contractions of two tensors and the Levi-Civita symbol $\varepsilon$
\eq{
    [ \bs{x}_1 \times \bs{x}_2 ]_\mu & = \sum_{\nu\lambda} \varepsilon_{\mu\nu\lambda} x_{1,\nu} x_{2,\lambda} , \label{p1p2} \\
    [ A \times \bs{x} ]_{\mu\nu} & = \sum_{\lambda\rho} A_{\mu\lambda} \varepsilon_{\nu\lambda\rho} x_{\rho} , \\
    [ \bs{x} \times A ]_{\mu\nu} & = \sum_{\lambda\rho} \varepsilon_{\mu\lambda\rho} x_\lambda A_{\rho\nu} .
}
A skew-symmetric matrix $\bs{x}^\times$ associated with a vector $\bs{x}$ is defined as
\eq{
    [\bs{x}^\times]_{\mu\nu} & \coloneqq - \sum_\lambda \varepsilon_{\mu\nu\lambda} x_\lambda ,
}
which is explicitly written as
\eq{
    \bs{x}^\times & = 
    \begin{pmatrix}
        0 & -x_3 & x_2 \\
        x_3 & 0 & -x_1 \\
        -x_2 & x_1 & 0 
    \end{pmatrix}
    .
}
Using this expression, we can write
\eq{
    \bs{x}_1 \times \bs{x}_2 & = \bs{x}_1^\times \cdot \bs{x}_2 = - \bs{x}_2^\times \cdot \bs{x}_1 , \\
    A \times \bs{x} & = A \cdot \bs{x}^\times , \\
    \bs{x} \times A & = \bs{x}^\times \cdot A .
}
In particular, for the three-dimensional identity matrix $I_3$,
\eq{
    I_3 \times \bs{x} & = \bs{x} \times I_3 = \bs{x}^\times , \\
    \hat{\bs{x}} \times I_3 \times \hat{\bs{x}} & = \hat{\bs{x}}^\times \cdot \hat{\bs{x}}^\times = - I_3 + \mathcal P^\parallel (\bs{x}) = - \mathcal P^\perp (\bs{x}) ,
}
where $\hat{\bs{x}} \coloneqq \bs{x}/|\bs{x}|$ is the unit vector in the direction of $\bs{x}$ and the projection operators are given by
\eq{
    \mathcal P^\parallel (\bs{x}) & \coloneqq \hat{\bs{x}}\hat{\bs{x}} , \label{P parallel} \\
    \mathcal P^\perp (\bs{x}) & \coloneqq I_3 - \mathcal P^\parallel (\bs{x}) .
}
A combination of tensors written without any symbol, e.g., $\hat{\bs{x}}\hat{\bs{x}}$ in Eq.~\eqref{P parallel}, denotes the tensor product.
We often write the magnitude of a vector using the corresponding non-bold symbol, $x \coloneqq |\bs{x}|$, when no confusion arises.

\subsection{Technical lemmas}\label{sec: technical lemmas}

The following lemma is a special case of the result on page 651 of Ref.~\cite{Boyd2004Convex}.

\begin{lemma}
Let $A$ be a real block matrix of the form
\eq{
    A = 
    \begin{pmatrix}
        O_n & B \\
        B^T & C 
    \end{pmatrix}
    , \label{lemma}
}
where $B$ is an $n\times m$ real matrix and $C$ is an $m\times m$ real symmetric matrix.
Then $A$ is positive semidefinite if and only if $B = O_{n\times m}$ and $C$ is positive semidefinite.
\end{lemma}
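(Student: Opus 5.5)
The plan is to prove the two directions separately; the ``if'' direction is immediate and the content lies in the ``only if'' direction, which I would handle with a scaling (quadratic-in-a-parameter) argument on the quadratic form rather than invoking Schur complements.

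For the ``if'' direction, suppose $B = O_{n\times m}$ and $C$ is positive semidefinite. Writing a test vector as $\bs{x} = (\bs{u}^T\ \bs{v}^T)^T$ with $\bs{u}\in\mathbb R^n$, $\bs{v}\in\mathbb R^m$, we have $\bs{x}\cdot A\cdot\bs{x} = 2\,\bs{u}\cdot B\cdot\bs{v} + \bs{v}\cdot C\cdot\bs{v} = \bs{v}\cdot C\cdot\bs{v}\geq 0$. Hence $A$ is positive semidefinite.

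For the ``only if'' direction, assume $A$ is positive semidefinite. Restricting to vectors with $\bs{u}=0$ gives $\bs{v}\cdot C\cdot\bs{v}\geq0$ for all $\bs{v}$, so $C$ is positive semidefinite. To show $B=O_{n\times m}$, fix arbitrary $\bs{u},\bs{v}$ and a real parameter $t$, and evaluate the form on $(t\bs{u}^T\ \bs{v}^T)^T$:
\eq{
    0 \leq 2t\, \bs{u}\cdot B\cdot\bs{v} + \bs{v}\cdot C\cdot\bs{v} \quad \text{for all } t\in\mathbb R .
}
The right-hand side is affine in $t$, because the $O_n$ block kills the $t^2$ term, and it is bounded below on all of $\mathbb R$. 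Hence its slope must vanish, giving $\bs{u}\cdot B\cdot\bs{v}=0$. Since $\bs{u}$ and $\bs{v}$ are arbitrary, choosing them as standard basis vectors yields $B_{kl}=0$ for every entry, so $B=O_{n\times m}$.

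The only step requiring any care is this last one: recognizing that the zero diagonal block removes the quadratic term in $t$, so nonnegativity of an affine function on all of $\mathbb R$ forces the linear coefficient to be zero. An alternative would be the generalized Schur complement criterion from Ref.~\cite{Boyd2004Convex}, which requires $\mathrm{range}(B^T)\subseteq\mathrm{range}(O_n)=\{0\}$. However, the elementary scaling argument is self-contained, and I would present that.
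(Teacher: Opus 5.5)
Your proof is correct and follows essentially the same route as the paper: the zero diagonal block removes the $t^2$ term, so the quadratic form along a scaled test vector is affine in $t$, and nonnegativity forces the cross term $\bs{u}\cdot B\cdot\bs{v}$ to vanish. The paper argues by contradiction with the specific choice $\bs{u}=B\cdot\bs{x}$, which gives $2t|B\cdot\bs{x}|^2+\bs{x}\cdot C\cdot\bs{x}$; you instead take arbitrary $\bs{u},\bs{v}$ and read off the entries of $B$ directly, which is only a cosmetic difference.
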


\begin{proof}
Suppose that $A$ is positive semidefinite. 
Assume to the contrary that $B\neq O_{n\times m}$.
Then there exists a vector $\bs{x}\in\mathbb{R}^m$ such that $B \cdot \bs{x}\neq 0$.
For $t\in\mathbb{R}$, define
\eq{
    \bs{w}_t \coloneqq 
    \begin{pmatrix} 
        tB \cdot \bs{x} \\ 
        \bs{x} 
    \end{pmatrix}.
}
Then
\eq{
    \bs{w}_t \cdot A \cdot \bs{w}_t = 2t | B \cdot \bs{x} |^2 + \bs{x} \cdot C \cdot \bs{x}.
}
Since $B\cdot \bs{x}\neq 0$, the right-hand side becomes negative for sufficiently large negative $t$, contradicting the positive semidefiniteness of $A$. 
Hence $B=O_{n\times m}$.
It then follows immediately that $C$ is positive semidefinite.

Conversely, if $B=O_{n\times m}$ and $C$ is positive semidefinite, $A$ is obviously positive semidefinite.
\end{proof}
The negative semidefinite case follows by reversing the signs and inequalities and thus we obtain the following result:

\begin{corollary}
If $B \neq O_{n\times m}$ in Eq.~\eqref{lemma}, then $A$ is indefinite.
\end{corollary}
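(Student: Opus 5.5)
The corollary says that if $B \neq O_{n\times m}$ in Eq.~\eqref{lemma}, then $A$ is indefinite. Indefinite means $A$ is neither positive semidefinite nor negative semidefinite. The plan is therefore to rule out each semidefiniteness separately, using the lemma once directly and once applied to $-A$.

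First, positive semidefiniteness is excluded directly by the lemma. Since $B\neq O_{n\times m}$, the ``only if'' direction shows that $A$ cannot be positive semidefinite, whatever $C$ is.

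Second, for negative semidefiniteness, I will consider
\eq{
    -A = \begin{pmatrix} O_n & -B \\ -B^T & -C \end{pmatrix} .
}
This has exactly the form of Eq.~\eqref{lemma}, with $B$ replaced by $-B$ and $C$ by $-C$. Here $-B$ is still a real $n\times m$ matrix and $-C$ is still real symmetric. Because $-B\neq O_{n\times m}$, the lemma shows that $-A$ is not positive semidefinite, so $A$ is not negative semidefinite.

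Combining the two steps, $A$ is neither positive nor negative semidefinite. Hence there exist vectors $\bs{u},\bs{v}$ with $\bs{u}\cdot A\cdot\bs{u}>0$ and $\bs{v}\cdot A\cdot\bs{v}<0$, which is the definition of indefinite.

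There is no real obstacle here. The only point needing care is checking that $-A$ satisfies the hypotheses of the lemma, which holds because negating a symmetric matrix keeps it symmetric and the zero block stays zero.

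For the intended application to $\tilde{\mathcal R}$, set $B=\tilde H_{\mr{skew}}$ and $C=-2(\tilde\Gamma_{\mr{sym}}+\Gamma_{\mr{St}})$. The lower-left block is then $-\tilde H_{\mr{skew}}=\tilde H_{\mr{skew}}^T$, so $\tilde{\mathcal R}$ has the required form.
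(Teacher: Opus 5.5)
Your proof is correct and is essentially the paper's argument. The paper rules out positive semidefiniteness with the lemma and says the negative semidefinite case ``follows by reversing the signs,'' which is exactly your application of the lemma to $-A$ (with blocks $-B$ and $-C$).
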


\section{Tangential force}\label{sec: tangential force}

Suppose that the contact between particles $i$ and $j$ is formed at $t=t_0$ and lost at $t=t_f$.
The tangential displacement $\bs{\xi}_{ij}$ is zero at $t=t_0$, updated based on the algorithm explained here, and reset to zero at $t=t_f$.
During the contact, the vectors $\bs{r}_{ij}$ and $\bs{\xi}_{ij}$ are required to remain perpendicular at all times
\eq{
    \bs{r}_{ij} (t) \cdot \bs{\xi}_{ij} (t) & = 0 . \label{orthogonality}
}
Using the tangential displacement $\bs{\xi}_{ij}$ at time $t$, where $t_0\leq t < t_f$, the time derivative of the bare displacement $\bs{\xi}_{ij}^{\mr{bare}}$ is simply proportional to the tangential relative velocity
\eq{
    \dv{\bs{\xi}_{ij}^{\mr{bare}} (t)}{t} & = \bs{v}_{ij}^\perp (t) \coloneqq \mathcal P^\perp (\bs{r}_{ij}(t)) \cdot \bs{v}_{ij} (t) - \bs{\nu}_{ij} (t) \times \hat{\bs{r}}_{ij} (t) . \label{xi bare}
}
This equation of motion is solved up to time $t + \dd t$ with an initial condition $\bs{\xi}^{\mr{bare}}_{ij} (t) = \bs{\xi}_{ij}(t)$.
Since $\bs{\xi}^{\mr{bare}}_{ij}(t+\dd t)$ is not necessarily orthogonal to $\hat{\bs{r}}_{ij}(t + \dd t)$ as required in Eq.~\eqref{orthogonality} due to a rigid-body rotation of the pair $ij$, the bare displacement at $t + \dd t$ needs to be rotated as\footnote{
Strictly, there are two modes of rigid-body rotation: ``tumbling'' and ``spinning'' motions.
In many numerical algorithms including LAMMPS~\cite{Plimpton1995Fast,Thompson2022LAMMPS}, however, only the former is corrected as in Eq.~\eqref{sticking disp}.
See the section of \texttt{pair\_style granular} in the LAMMPS documentation.
This is not fatal because the latter mode does not rotate the contact normal at a contact point.
}
\eq{
    \bs{\xi}_{ij}^{\mr{stick}} (t + \dd t) & \coloneqq \mathcal P^\perp (\bs{r}_{ij}(t + \dd t)) \cdot \bs{\xi}^{\mr{bare}}_{ij} (t + \dd t) \frac{|\bs{\xi}^{\mr{bare}}_{ij}(t+\dd t)|}{|\mathcal P^\perp (\bs{r}_{ij}(t + \dd t)) \cdot \bs{\xi}^{\mr{bare}}_{ij} (t + \dd t)|} , \label{sticking disp}
}
which we call the sticking displacement.
Expanding $\mathcal P^\perp (\bs{r}_{ij}(t + \dd t))$ and $\bs{\xi}_{ij}^{\mr{bare}} (t + \dd t)$ in Eq.~\eqref{sticking disp}, we obtain
\eq{
    \mathcal P^\perp (\bs{r}_{ij}(t + \dd t)) \cdot \bs{\xi}^{\mr{bare}}_{ij} (t + \dd t) & = \left ( I_3 - \frac{1}{r_{ij}(t)^2} \qty ( \bs{r}_{ij}(t) \bs{r}_{ij}(t) + \dd \bs{r}_{ij}(t) \bs{r}_{ij}(t) + \bs{r}_{ij}(t) \dd \bs{r}_{ij}(t) ) \right. \notag \\
    & \qquad \left. - \frac{1}{r_{ij}(t)^4} \qty ( \bs{r}_{ij}(t) \bs{r}_{ij}(t) ) \qty ( - 2 \bs{r}_{ij} (t) \cdot \dd \bs{r}_{ij}(t) ) \right ) \cdot \qty ( \bs{\xi}_{ij} (t) + \dd \bs{\xi}^{\mr{bare}}_{ij} (t) ) + o (\dd t) \notag \\
    & = \bs{\xi}_{ij} (t) - \frac{\hat{\bs{r}}_{ij}(t) \bs{\xi}_{ij} (t)}{r_{ij}(t)} \cdot \dd \bs{r}_{ij}(t) + \dd \bs{\xi}^{\mr{bare}}_{ij} (t) + o (\dd t) , \label{rotate 1}
}
which leads to
\eq{
    |\mathcal P^\perp (\bs{r}_{ij}(t + \dd t)) \cdot \bs{\xi}^{\mr{bare}}_{ij} (t + \dd t)| & = \sqrt{\bs{\xi}_{ij}(t)^2 + 2 \bs{\xi}_{ij}(t) \cdot \dd \bs{\xi}^{\mr{bare}}_{ij}(t)} + o (\dd t) = \qty | \bs{\xi}^{\mr{bare}}_{ij} (t + \dd t) | + o (\dd t) . \label{rotate 2}
}
Eqs.~\eqref{sticking disp}--\eqref{rotate 2} yield the equation of motion for the sticking displacement
\eq{
    \dv{\bs{\xi}_{ij}^{\mr{stick}} (t)}{t} \coloneqq \lim_{\dd t\to0} \frac{\bs{\xi}_{ij}^{\mr{stick}} (t + \dd t) - \bs{\xi}_{ij} (t)}{\dd t} & = \bs{v}_{ij}^\perp (t) - \frac{\hat{\bs{r}}_{ij}(t)\bs{\xi}_{ij}(t)}{r_{ij}(t)} \cdot \bs{v}_{ij} (t) ,
}
where the term ${\hat{\bs{r}}_{ij}\bs{\xi}_{ij}} \cdot \bs{v}_{ij} /{r_{ij}}$ can be interpreted as an inertial force that enforces the tangential displacement to remain perpendicular to $\bs{r}_{ij}$~\cite{Silbert2001Granular}.
With the sticking displacement $\bs{\xi}_{ij}^{\mr{stick}}$, the ``test force'' at time $t+\dd t$ is defined as
\eq{
    \bs{G}_{ij} (t+\dd t) & \coloneqq k_{s,ij} ( \Delta_{ij} (t + \dd t) ) \bs{\xi}_{ij}^{\mr{stick}} (t+\dd t) + \eta_{s,ij} (\Delta_{ij}(t+\dd t)) \bs{v}_{ij}^\perp (t + \dd t) .
}
The test force is selected as the tangential force $\bs{F}_{ij}^\perp = \bs{G}_{ij}$ at time $t + \dd t$ if its magnitude is below the Coulomb limit
\eq{
    |\bs{G}_{ij} (t + \dd t)| < \mu |\bs{F}_{ij}^\parallel (t + \dd t)| . \label{sticking}
}
In this case the contact between particles $i$ and $j$ is considered to be sticking due to static friction.
The tangential displacement at $t+\dd t$ is given by $\bs{\xi}_{ij} (t+ \dd t) = \bs{\xi}_{ij}^{\mr{stick}} (t + \dd t)$.
If Eq.~\eqref{sticking} does not hold, the tangential displacement is rescaled so that the resulting tangential force satisfies $|\bs{F}_{ij}^{\perp} (t + \dd t)| = \mu |\bs{F}_{ij}^\parallel (t + \dd t)|$:
\eq{
    \bs{\xi}_{ij}^{\mr{slip}} (t + \dd t) & \coloneqq \frac{1}{k_{s,ij} (t + \dd t)} \qty ( \mu |\bs{F}_{ij}^\parallel (t + \dd t)| \hat{\bs{G}}_{ij} (t + \dd t) - \eta_{s,ij} ( \Delta_{ij} ( t + \dd t) ) \bs{v}_{ij}^\perp (t + \dd t) ) ,
}
which leads to
\eq{
    \bs{F}_{ij}^\perp (t + \dd t) & = k_{s,ij} ( \Delta_{ij} ( t + \dd t) ) \bs{\xi}_{ij}^{\mr{slip}} (t + \dd t) + \eta_{s,ij} (\Delta_{ij} (t + \dd t)) \bs{v}_{ij}^\perp (t + \dd t) = \mu |\bs{F}_{ij}^\parallel (t + \dd t)| \hat{\bs{G}}_{ij} (t + \dd t) .
}
Then the tangential displacement at time $t + \dd t$ is given by $\bs{\xi}_{ij} (t + \dd t) = \bs{\xi}_{ij}^{\mr{slip}} (t + \dd t)$ and the contact is considered to be slipping as a result of the transition from static to kinetic friction.

In summary, starting from the tangential displacement $\bs{\xi}_{ij}$ at time $t$, the update rule is as follows.
\begin{enumerate}
    \item Solve the equation of motion of the sticking displacement up to time $t+\dd t$
    \eq{
        \dv{\bs{\xi}_{ij}^{\mr{stick}}}{t} & = \bs{v}_{ij}^\perp - \frac{\hat{\bs{r}}_{ij}\bs{\xi}^{\mr{stick}}_{ij}}{r_{ij}} \cdot \bs{v}_{ij} \label{ODE for xi}   
    }
    with an initial condition $\bs{\xi}_{ij}^{\mr{stick}}(t) = \bs{\xi}_{ij}(t)$.
    \item Compute the test force
    \eq{
        \bs{G}_{ij} (t+\dd t) = k_{s,ij}(\Delta_{ij}(t + \dd t)) \bs{\xi}_{ij}^{\mr{stick}} (t+\dd t) + \eta_{s,ij} (\Delta_{ij}(t + \dd t)) \bs{v}_{ij}^\perp (t + \dd t) .
    }
    \item If the Coulomb condition is satisfied, $|\bs{G}_{ij}(t + \dd t)| < \mu |\bs{F}_{ij}^\parallel (t + \dd t)|$, set 
    \eq{
        \bs{F}_{ij}^\perp (t + \dd t) & = \bs{G}_{ij} (t + \dd t) , \\
        \bs{\xi}_{ij}(t + \dd t) & = \bs{\xi}_{ij}^{\mr{stick}}(t + \dd t) .
    }
    Otherwise, set
    \eq{
        \bs{F}_{ij}^\perp (t + \dd t) & = \mu | \bs{F}_{ij}^\parallel (t + \dd t) | \hat{\bs{G}}_{ij} (t + \dd t) , \label{F slip} \\
        \bs{\xi}_{ij} (t + \dd t) & = \bs{\xi}_{ij}^{\mr{slip}} (t + \dd t) = \frac{1}{k_{s,ij}(\Delta_{ij}(t + \dd t))} \qty ( \mu | \bs{F}_{ij}^\parallel (t + \dd t) | \hat{\bs{G}}_{ij} (t + \dd t) - \eta_s (\Delta_{ij}(t + \dd t)) \bs{v}_{ij}^\perp(t+\dd t) ) \label{xi slip} .
    }
    \item Update the time $t \to t + \dd t$ and return to step 1.
\end{enumerate}
Eq.~\eqref{F perp} schematically represents this procedure.
Eqs.~\eqref{F slip} and \eqref{xi slip} give rise to nondifferentiability in this model.

From the definition of $\bs{F}_{ij}^\perp$, the following inequality holds at all times:
\eq{
    |\bs{F}_{ij}^\perp| \leq \mu |\bs{F}_{ij}^\parallel| . \label{Coulomb}
}
Since we only consider infinitesimal perturbations of static packings, we assume that all contacts are sticking, or equivalently~\cite{Ishima2023Theory}
\eq{
    |\bs{F}_{ij}^\perp| < \mu |\bs{F}_{ij}^\parallel| . \label{assumption on friction}
}
This assumption is natural as long as we consider infinitesimal perturbations. 
It is practically quite difficult, if not impossible, to determine whether the inequality in Eq.~\eqref{Coulomb} is exactly saturated in numerical simulations.
The assumption in Eq.~\eqref{assumption on friction} removes the nondifferentiability from the model and allows us to compute derivatives of the pair force and torque.
The time evolution of the tangential displacement is therefore governed solely by Eq.~\eqref{ODE for xi}:
\eq{
    \dv{\bs{\xi}_{ij}}{t} & = \bs{v}_{ij}^\perp - \frac{\hat{\bs{r}}_{ij}\bs{\xi}_{ij}}{r_{ij}} \cdot \bs{v}_{ij} \label{increment}
}
with an initial condition $\bs{\xi}_{ij}(t_0) = 0$.

\section{Jacobian matrix}\label{sec: derivatives of force and torque}

\subsection{Orientation of a sphere}\label{sec: orientation}

Here we explain the complete definition of the variable $\bs{\theta}$ introduced in Sec.~\ref{sec: linearization}.
The orientation of an object is described by a rotation matrix $\Theta$, which is an element of the special orthogonal group $\mr{SO}(3)$ and written as
\eq{
    \Theta & = \exp \qty ( - \bs{\theta}^\times  ) , \label{SO3}
}
where $\bs{\theta}$ is a three-dimensional vector.
The skew-symmetric matrix $\bs{\theta}^\times$ is an element of the Lie algebra $\mathfrak{so}(3)$, and the vector $\bs{\theta}$ represents the rotation axis by its direction and the rotation angle by its magnitude.
Using the angular velocity $\bs{\omega}$, the time evolution of the rotation matrix is given by
\eq{
    \dv{\Theta}{t} & = - \bs{\omega}^\times \cdot \Theta . \label{equation of motion for R}
}
Substituting Eq.~\eqref{SO3} and assuming $|\bs{\theta}| \ll 1$, we have
\eq{
    \dv{\bs{\theta}^\times}{t} & = \bs{\omega}^\times + \order{\bs{\theta}} .
}
Thus if the orientation changes only infinitesimally, its time evolution is governed by a simple equation
\eq{
    \dv{\bs{\theta}}{t} & = \bs{\omega} , \label{dthetadt}
}
which has the same structure as a two-dimensional rotation.

\subsection{Block matrices}

The matrices $H$ and $\Gamma$ are composed of three-dimensional blocks of the form $\pdv*{\bs{A}_i}{\bs{a}_j}$, where $\bs{A} = \bs{F}, \bs{T}$ and $\bs{a} = \bs{r}, \bs{\theta}, \bs{p}, \bs{L}$.
Since the pair force and torque depend only on the relative quantities, $\bs{r}_{ij}, \bs{\chi}_{ij}, \bs{v}_{ij}, \bs{\nu}_{ij}$, these three-dimensional blocks are written as
\eq{
    \pdv{\bs{A}_i}{\bs{r}_j} & = 
    \begin{dcases}
        \pdv{\bs{A}_{ij}}{\bs{r}_{ij}} & (i\neq j) \\
        - \sum_{\substack{k=1\\(k\neq i)}}^N \pdv{\bs{A}_{ik}}{\bs{r}_{ik}} & (i = j)
    \end{dcases}
    , \label{ar} \\
    \pdv{\bs{A}_i}{\bs{\theta}_j} & = 
    \begin{dcases}
        \pdv{\bs{A}_{ij}}{\bs{\chi}_{ij}} R_j & ( i \neq j ) \\
        \sum_{\substack{k=1\\(k\neq i)}}^N \pdv{\bs{A}_{ik}}{\bs{\chi}_{ik}} R_j & ( i = j ) 
    \end{dcases}
    , \\
    \pdv{\bs{A}_i}{\bs{p}_j} & = 
    \begin{dcases}
        \pdv{\bs{A}_{ij}}{\bs{v}_{ij}} \frac{1}{m_j} & (i\neq j) \\
        -\sum_{\substack{k=1\\(k\neq i)}}^N \pdv{\bs{A}_{ik}}{\bs{v}_{ik}} \frac{1}{m_j} & (i = j)
    \end{dcases} 
    , \label{dAdp} \\
    \pdv{\bs{A}_i}{\bs{L}_j} & = 
    \begin{dcases}
        \pdv{\bs{A}_{ij}}{\bs{\nu}_{ij}} \frac{R_j}{J_j} & ( i \neq j ) \\
        \sum_{\substack{k=1\\(k\neq i)}}^N \pdv{\bs{A}_{ik}}{\bs{\nu}_{ik}} \frac{R_j}{J_j} & ( i = j ) 
    \end{dcases}
    \label{aL}. 
}
Changing the state variable from $\bs{q}$ to $\bs{x}$, we write
\eq{
    \tilde H & =
    \begin{pmatrix}
        \widetilde{\pdv{\bs{F}}{\bs{r}}} & \widetilde{\pdv{\bs{F}}{\bs{\theta}}} \\[0.5em]
        \widetilde{\pdv{\bs{T}}{\bs{r}}} & \widetilde{\pdv{\bs{T}}{\bs{\theta}}} 
    \end{pmatrix} , \\
    \tilde \Gamma 
    & =
    \begin{pmatrix}
        \widetilde{\pdv{\bs{F}}{\bs{p}}} & \widetilde{\pdv{\bs{F}}{\bs{L}}} \\[0.5em]
        \widetilde{\pdv{\bs{T}}{\bs{p}}} & \widetilde{\pdv{\bs{T}}{\bs{L}}} 
    \end{pmatrix} 
    .
}
and Eqs.~\eqref{ar}--\eqref{aL} are rewritten as
\eq{
    \widetilde{\pdv{\bs{F}_i}{\bs{r}_j}} & \coloneqq 
    \begin{dcases}
        \frac{1}{\sqrt{m_im_j}} \pdv{\bs{F}_{ij}}{\bs{r}_{ij}} & (i \neq j) \\
        - \frac{1}{m_i} \sum_{\substack{k=1\\(k\neq i)}}^N \pdv{\bs{F}_{ik}}{\bs{r}_{ik}} & (i = j)
    \end{dcases}
    , \label{tfr} \\
    \widetilde{\pdv{\bs{F}_i}{\bs{\theta}_j}} & \coloneqq 
    \begin{dcases}
        \frac{\zeta}{\sqrt{m_im_j}}\pdv{\bs{F}_{ij}}{\bs{\chi}_{ij}} & ( i \neq j ) \\
        \frac{\zeta}{m_i}\sum_{\substack{k=1\\(k\neq i)}}^N \pdv{\bs{F}_{ik}}{\bs{\chi}_{ik}} & ( i = j ) 
    \end{dcases} , \\
    \widetilde{\pdv{\bs{F}_i}{\bs{p}_j}} & \coloneqq 
    \begin{dcases}
        \frac{1}{\sqrt{m_im_j}} \pdv{\bs{F}_{ij}}{\bs{v}_{ij}} & (i\neq j) \\
        -\frac{1}{m_i} \sum_{\substack{k=1\\(k\neq i)}}^N \pdv{\bs{F}_{ik}}{\bs{v}_{ik}} & (i = j)
    \end{dcases} , \\
    \widetilde{\pdv{\bs{F}_i}{\bs{L}_j}} & \coloneqq 
    \begin{dcases}
        \frac{\zeta}{\sqrt{m_im_j}} \pdv{\bs{F}_{ij}}{\bs{\nu}_{ij}} & ( i \neq j ) \\
        \frac{\zeta}{m_i} \sum_{\substack{k=1\\(k\neq i)}}^N \pdv{\bs{F}_{ik}}{\bs{\nu}_{ik}} & ( i = j ) 
    \end{dcases} , \\
    \widetilde{\pdv{\bs{T}_i}{\bs{r}_j}} & \coloneqq 
    \begin{dcases}
        \frac{\zeta}{\sqrt{m_im_j}} \pdv{\tilde{\bs{T}}_{ij}}{\bs{r}_{ij}} & (i\neq j) \\
        - \frac{\zeta}{m_i} \sum_{\substack{k=1\\(k\neq i)}}^N \pdv{\tilde{\bs{T}}_{ik}}{\bs{r}_{ik}} & (i = j)
    \end{dcases} 
    , \\
    \widetilde{\pdv{\bs{T}_i}{\bs{\theta}_j}} & \coloneqq 
    \begin{dcases}
        \frac{\zeta^2}{\sqrt{m_im_j}} \pdv{\tilde{\bs{T}}_{ij}}{\bs{\chi}_{ij}} & ( i \neq j ) \\
        \frac{\zeta^2}{m_i} \sum_{\substack{k=1\\(k\neq i)}}^N \pdv{\tilde{\bs{T}}_{ik}}{\bs{\chi}_{ik}} & ( i = j ) 
    \end{dcases} 
    , \\
    \widetilde{\pdv{\bs{T}_i}{\bs{p}_j}} & \coloneqq 
    \begin{dcases}
        \frac{\zeta}{\sqrt{m_im_j}} \pdv{\tilde{\bs{T}}_{ij}}{\bs{v}_{ij}} & (i\neq j) \\
        - \frac{\zeta}{m_i} \sum_{\substack{k=1\\(k\neq i)}}^N \pdv{\tilde{\bs{T}}_{ik}}{\bs{v}_{ik}} & (i = j)
    \end{dcases} 
    ,  \\
    \widetilde{\pdv{\bs{T}_i}{\bs{L}_j}} & \coloneqq 
    \begin{dcases}
        \frac{\zeta^2}{\sqrt{m_im_j}} \pdv{\tilde{\bs{T}}_{ij}}{\bs{\nu}_{ij}} & ( i \neq j ) \\
        \frac{\zeta^2}{m_i} \sum_{\substack{k=1\\(k\neq i)}}^N \pdv{\tilde{\bs{T}}_{ik}}{\bs{\nu}_{ik}} & ( i = j ) 
    \end{dcases}
    \label{ttl} ,
}
where $\zeta \coloneqq \sqrt{5/2}$, $\tilde{\bs{T}}_{ij} \coloneqq \bs{T}_{ij}/R_i =  \tilde{\Delta}_{ij} \hat{\bs{r}}_{ij} \times \bs{F}^\perp_{ij}$, and $\tilde \Delta_{ij} \coloneqq 1 - {\Delta_{ij}}/{2R_i}$.

\subsection{First derivatives}

We compute the first derivatives of the pair force and torque in the linear spring-dashpot model with respect to the relative quantities $\bs{r}_{ij}$, $\bs{\chi}_{ij}$, $\bs{v}_{ij}$, and $\bs{\nu}_{ij}$, which appear in Eqs.~\eqref{tfr}--\eqref{ttl}.
The first derivative of the normal force $\bs{F}^\parallel_{ij}$ with respect to $\bs{r}_{ij}$ is given by
\eq{
    \pdv{\bs{F}_{ij}^\parallel}{\bs{r}_{ij}} & = - \pdv{f_{n,ij}(\Delta_{ij})}{\bs{r}_{ij}} \hat{\bs{r}}_{ij} - f_{n,ij} (\Delta_{ij}) \pdv{\hat{\bs{r}}_{ij}}{\bs{r}_{ij}} + \pdv{\qty ( \eta_{n,ij} (\Delta_{ij}) \mathcal P^\parallel (\bs{r}_{ij}) \cdot \bs{v}_{ij} )}{\bs{r}_{ij}} \notag \\
    & \xrightarrow{\mr{m.e.}} f_{n,ij}'(\Delta_{ij}) \mathcal P^\parallel (\bs{r}_{ij}) - \frac{f_{n,ij}(\Delta_{ij})}{r_{ij}} \mathcal P^\perp (\bs{r}_{ij}) ,
}
where $\mr{m.e.}$ denotes mechanical equilibrium, i.e., all velocities and angular velocities are set to zero.
Likewise we have
\eq{
    \pdv{\bs{F}_{ij}^\parallel}{\bs{v}_{ij}}  & = \eta_{n,ij}(\Delta_{ij}) \mathcal P^\parallel (\bs{r}_{ij}) , \\
    \pdv{\bs{F}_{ij}^\parallel}{\bs{\chi}_{ij}} & = \pdv{\bs{F}_{ij}^\parallel}{\bs{\nu}_{ij}} = 0 .
}

We need the derivatives of $\bs{\xi}_{ij}$ to compute those of $\bs{F}^\perp_{ij}$.
Eq.~\eqref{increment} yields~\cite{Chattoraj2019Oscillatory, Ishima2023Eigenvalue}
\eq{
    \dd \bs{\xi}_{ij} & = \qty ( \mathcal P^\perp (\bs{r}_{ij}) - \frac{\hat{\bs{r}}_{ij} \bs{\xi}_{ij}}{r_{ij}} ) \cdot \dd \bs{r}_{ij} - \dd \bs{\chi}_{ij} \times \hat{\bs{r}}_{ij} ,
}
which leads to
\eq{
    \pdv{\bs{\xi}_{ij}}{\bs{r}_{ij}}  & = \mathcal P^\perp (\bs{r}_{ij}) - \frac{\hat{\bs{r}}_{ij}\bs{\xi}_{ij}}{r_{ij}} , \label{dxidr} \\
    \pdv{\bs{\xi}_{ij}}{\bs{\chi}_{ij}} & = \hat{\bs{r}}_{ij}^\times , \label{dxidtheta} \\
    \pdv{\bs{\xi}_{ij}}{\bs{v}_{ij}} & = \pdv{\bs{\xi}_{ij}}{\bs{\nu}_{ij}} = 0 .
}
These expressions yield the derivatives of the tangential force $\bs{F}^\perp_{ij} = k_{s,ij}(\Delta_{ij}) \bs{\xi}_{ij} + \eta_{s,ij}(\Delta_{ij}) \bs{v}_{ij}^\perp$
\eq{
    \pdv{\bs{F}_{ij}^\perp}{\bs{r}_{ij}} & \xrightarrow{\mr{m.e.}} k_{s,ij}(\Delta_{ij}) \mathcal P^\perp (\bs{r}_{ij}) - \qty ( \frac{k_{s,ij}(\Delta_{ij})}{r_{ij}} + k'_{s,ij}(\Delta_{ij}) ) \hat{\bs{r}}_{ij} \bs{\xi}_{ij} , \\
    \pdv{\bs{F}_{ij}^\perp}{\bs{\chi}_{ij}} & = k_{s,ij}(\Delta_{ij}) \hat{\bs{r}}_{ij}^\times , \\
    \pdv{\bs{F}_{ij}^\perp}{\bs{v}_{ij}} & = \eta_{s,ij} (\Delta_{ij}) \mathcal P^\perp (\bs{r}_{ij}) , \\
    \pdv{\bs{F}_{ij}^\perp}{\bs{\nu}_{ij}} & = \eta_{s,ij} (\Delta_{ij}) \hat{\bs{r}}_{ij}^\times .
}

Finally we compute the derivatives of the pair torque. 
We have
\eq{
    \pdv{\bs{T}_{ij}}{\bs{r}_{ij}} & = \qty ( R_i - \frac{\Delta_{ij}}{2} ) \pdv{\qty ( \hat{\bs{r}}_{ij} \times \bs{F}_{ij}^\perp )}{\bs{r}_{ij}} - \frac{1}{2} \qty ( \hat{\bs{r}}_{ij} \times \bs{F}_{ij}^\perp ) \pdv{\Delta_{ij}}{\bs{r}_{ij}} \notag \\
    & = \qty ( R_i - \frac{\Delta_{ij}}{2} ) \qty ( \hat{\bs{r}}_{ij} \times \pdv{\bs{F}_{ij}^\perp}{\bs{r}_{ij}} - \frac{1}{r_{ij}} \bs{F}_{ij}^\perp \times \mathcal P^\perp (\bs{r}_{ij})  ) - \frac{1}{2} \qty ( \hat{\bs{r}}_{ij} \times \bs{F}_{ij}^\perp ) \qty ( - \hat{\bs{r}}_{ij} ) \notag \\
    & = \qty ( R_i - \frac{\Delta_{ij}}{2} ) \hat{\bs{r}}_{ij} \times \pdv{\bs{F}_{ij}^\perp}{\bs{r}_{ij}} - \bs{F}_{ij}^\perp \times \frac{1}{2} \qty ( \mathcal P^\parallel (\bs{r}_{ij}) + \frac{2R_i-\Delta_{ij}}{r_{ij}} \mathcal P^\perp (\bs{r}_{ij}) ) .
}
Substituting $\bs{F}_{ij}^\perp = k_{s,ij}(\Delta_{ij}) \bs{\xi}_{ij} + \eta_{s,ij}(\Delta_{ij}) \bs{v}_{ij}^\perp$ yields
\eq{
    \pdv{\bs{T}_{ij}}{\bs{r}_{ij}} & \xrightarrow{\mr{m.e.}} \qty ( R_i - \frac{\Delta_{ij}}{2} ) \hat{\bs{r}}_{ij} \times \qty ( k_{s,ij} (\Delta_{ij}) \mathcal P^\perp (\bs{r}_{ij}) - \qty ( \frac{k_{s,ij}(\Delta_{ij})}{r_{ij}} + k'_{s,ij}(\Delta_{ij}) ) \hat{\bs{r}}_{ij} \bs{\xi}_{ij} ) \notag \\
    & \qquad - k_{s,ij} (\Delta_{ij}) \bs{\xi}_{s,ij} \times \frac{1}{2} \qty ( \mathcal P^\parallel (\bs{r}_{ij}) + \frac{2R_i-\Delta_{ij}}{r_{ij}} \mathcal P^\perp (\bs{r}_{ij}) ) \notag \\
    & = \qty ( R_i - \frac{\Delta_{ij}}{2} ) k_{s,ij}(\Delta_{ij}) \hat{\bs{r}}_{ij}^\times - k_{s,ij}(\Delta_{ij}) \bs{\xi}_{ij}^\times \cdot \frac{1}{2} \qty ( \mathcal P^\parallel (\bs{r}_{ij}) + \frac{2R_i-\Delta_{ij}}{r_{ij}} \mathcal P^\perp (\bs{r}_{ij}) ) .
}
Likewise 
\eq{
    \pdv{\bs{T}_{ij}}{\bs{\chi}_{ij}} & = \qty ( R_i - \frac{\Delta_{ij}}{2} ) \hat{\bs{r}}_{ij} \times k_{s,ij}(\Delta_{ij}) \hat{\bs{r}}_{ij}^\times = \qty ( R_i - \frac{\Delta_{ij}}{2} ) \hat{\bs{r}}_{ij} \times k_{s,ij}(\Delta_{ij}) \qty ( I_3 \times \hat{\bs{r}}_{ij} ) = - \qty ( R_i - \frac{\Delta_{ij}}{2} ) k_{s,ij}(\Delta_{ij}) \mathcal P^\perp (\bs{r}_{ij}) , \\
    \pdv{\bs{T}_{ij}}{\bs{v}_{ij}} & = \qty ( R_i - \frac{\Delta_{ij}}{2} ) \hat{\bs{r}}_{ij} \times \eta_{s,ij} (\Delta_{ij}) \mathcal P^\perp (\bs{r}_{ij}) = \qty ( R_i - \frac{\Delta_{ij}}{2} ) \eta_{s,ij} (\Delta_{ij}) \hat{\bs{r}}_{ij}^\times , \\
    \pdv{\bs{T}_{ij}}{\bs{\nu}_{ij}} & = \qty ( R_i - \frac{\Delta_{ij}}{2} ) \hat{\bs{r}}_{ij} \times \eta_{s,ij} (\Delta_{ij}) \hat{\bs{r}}_{ij}^\times = - \qty ( R_i - \frac{\Delta_{ij}}{2} ) \eta_{s,ij} (\Delta_{ij}) \mathcal P^\perp (\bs{r}_{ij}) .
}

In the case of the linear spring-dashpot model introduced in Sec.~\ref{sec: sample preparation}, we obtain
\eq{
    \pdv{\bs{F}_{ij}}{\bs{r}_{ij}} & = k_n \mathcal P^\parallel (\bs{r}_{ij}) + \qty ( k_s - \frac{k_n\Delta_{ij}}{r_{ij}} ) \mathcal P^\perp (\bs{r}_{ij}) - \frac{k_s}{r_{ij}} \hat{\bs{r}}_{ij} \bs{\xi}_{ij} , \label{dFdr linear} \\
    \pdv{\bs{F}_{ij}}{\bs{\chi}_{ij}} & = k_s \hat{\bs{r}}_{ij}^\times , \\
    \pdv{\bs{F}_{ij}}{\bs{v}_{ij}} & = m_{\mr{eff},ij} \gamma_n \mathcal P^\parallel (\bs{r}_{ij}) + m_{\mr{eff},ij} \gamma_s \mathcal P^\perp (\bs{r}_{ij}) , \\
    \pdv{\bs{F}_{ij}}{\bs{\nu}_{ij}} & = m_{\mr{eff},ij} \gamma_s \hat{\bs{r}}_{ij}^\times , \\
    \pdv{\tilde{\bs{T}}_{ij}}{\bs{r}_{ij}} & = \tilde{\Delta}_{ij} k_s \hat{\bs{r}}_{ij}^\times - k_s \bs{\xi}_{s,ij}^\times \cdot \qty ( \frac{1}{2R_i} \mathcal P^\parallel (\bs{r}_{ij}) + \frac{\tilde{\Delta}_{ij}}{r_{ij}} \mathcal P^\perp (\bs{r}_{ij}) ) , \label{dTdr linear} \\
    \pdv{\tilde{\bs{T}}_{ij}}{\bs{\chi}_{ij}} & = - \tilde{\Delta}_{ij} k_s  \mathcal P^\perp (\bs{r}_{ij}) , \\
    \pdv{\tilde{\bs{T}}_{ij}}{\bs{v}_{ij}} & = \tilde{\Delta}_{ij} m_{\mr{eff},ij} \gamma_s \hat{\bs{r}}_{ij}^\times , \\
    \pdv{\tilde{\bs{T}}_{ij}}{\bs{\nu}_{ij}} & = - \tilde{\Delta}_{ij} m_{\mr{eff},ij} \gamma_s \mathcal P^\perp (\bs{r}_{ij}) . \label{dTdw linear}
}
Note that $k_{s,ij}' = 0$ in this case.
The terms involving $\bs{\xi}_{ij}$ and $\tilde \Delta_{ij}$ make the dynamical matrix $H$ non-symmetric.

\section{Simple examples of the spurious instability}\label{sec: minimal model}

\subsection{Disk on a floor}

Consider a two-dimensional disk of mass $m$ and radius $R$ on a floor.
The mass density of the disk is uniform and its moment of inertia is $J = mR^2/2$.
The interaction between the disk and the floor is modeled using the DEM.
For simplicity, the friction coefficient is set to infinity and the dashpot interaction is suppressed.
The position $x$ and the angle $\theta$ of the disk follow the equations of motion
\eq{
    m\dv[2]{x}{t} & = F \coloneqq - k (x + R\theta) , \label{minimal model x} \\
    J\dv[2]{\theta}{t} & = T \coloneqq (R-\Delta) F , \label{minimal model theta}
}
where $k$ is the spring constant of the tangential spring and $\Delta<R$ is the overlap between the disk and the floor.
The contact never slips because the friction coefficient is infinite.
Although the motion of the disk along the $y$-axis determines $\Delta$, it remains fixed and does not need to be considered provided that the force normal to the surface is initially balanced.
Note that we only have $F=T=0$ in the case of the exact Coulomb friction law.

Eqs.~\eqref{minimal model x} and \eqref{minimal model theta} are already linear in $x$ and $\theta$, and the time evolution of the mechanical energy is immediately given by
\eq{
    \dv{t} \qty ( \frac{1}{2}mv^2 + \frac{1}{2}J\omega^2 + \frac{1}{2}k \begin{pmatrix} x & R\theta \end{pmatrix} \cdot
    \begin{pmatrix}
        1 & 1 - \alpha/2 \\
        1 - \alpha/2 & 1-\alpha
    \end{pmatrix} \cdot
    \begin{pmatrix} x \\ R\theta \end{pmatrix} ) & = -\frac{k\alpha}{2} \begin{pmatrix} v & R\omega \end{pmatrix} \cdot
    \begin{pmatrix}
        0 & 1 \\
        -1 & 0
    \end{pmatrix} \cdot
    \begin{pmatrix} x \\ R\theta \end{pmatrix}
    , \label{minimal model energy}
}
where $\alpha \coloneqq \Delta/R$.
To proceed, we set $m=k=R=1$ and rewrite Eqs.~\eqref{minimal model x} and \eqref{minimal model theta} in the form given by Eq.~\eqref{linear eom} with\footnote{
In contrast to the main text, we do not adopt the normalization introduced in Eq.~\eqref{change}. 
Accordingly, we use notation that differs slightly from that in the main text, i.e., the matrices are not denoted with tildes.
}
\eq{
    \mathcal A & =
    \begin{pmatrix}
        O_2 & M^{-1} \\
        H & O_2
    \end{pmatrix} , \\
    M & =
    \begin{pmatrix}
        1 & 0 \\
        0 & 1/2
    \end{pmatrix} , \\
    H & =
    \begin{pmatrix}
        -1 & -1 \\
        - 1 + \alpha & -1 + \alpha
    \end{pmatrix} .
}
The state vector is defined as $\bs{q} \coloneqq \begin{pmatrix} x & \theta & v & \omega \end{pmatrix}^T$.
Likewise we can define the energy and dissipation functions
\eq{
    E (\bs{q}) & \coloneqq \frac{1}{2} \bs{q} \cdot \mathcal E \cdot \bs{q} , \\
    R (\bs{q}) & \coloneqq \frac{1}{2} \bs{q} \cdot \mathcal R \cdot \bs{q} ,
}
where
\eq{
    \mathcal{E} & \coloneqq 
    \begin{pmatrix}
        -H_{\mr{sym}} & O_2 \\
        O_2 & M
    \end{pmatrix}
    = 
    \begin{pmatrix}
        1 & 1 - \alpha/2 & 0 & 0 \\
        1 - \alpha/2 & 1-\alpha & 0 & 0 \\
        0 & 0 & 1 & 0 \\
        0 & 0 & 0 & 1/2
    \end{pmatrix} , \\
    \mathcal{R} & \coloneqq 
    \begin{pmatrix}
        O_2 & H_{\mr{skew}} \\
        - H_{\mr{skew}} & O_2
    \end{pmatrix}
    = 
    \frac{\alpha}{2}
    \begin{pmatrix}
        0 & 0 & 0  & -1 \\
        0 & 0 & 1 & 0 \\
        0 & 1 & 0 & 0 \\
        -1 & 0 & 0 & 0         
    \end{pmatrix}
    .
}

The eigenvalues of $\mathcal{E}$ are
\eq{
    \lambda_\pm & \coloneqq 1 - \frac{\alpha}{2} \pm \sqrt{\qty ( \frac{\alpha}{2} )^2 + \qty ( 1 - \frac{\alpha}{2} )^2} ,\,\,\, 1,\,\,\, 1/2.
}
The inequality $\lambda_-<0$ holds for all $0<\alpha<1$ and thus the disk is on a saddle point of the potential energy, contrary to our intuition.
The eigenvalues of $\mathcal R$ are $\pm1$, indicating that the energy is both supplied and dissipated via the tangential spring.

We set $\alpha = 1/2$ and solve Eqs.~\eqref{minimal model x} and \eqref{minimal model theta} with initial conditions $x = \omega = 1$ and $\theta = v = 0$.
In this case, the equations of motion simplify to
\eq{
    \dv[2]{x}{t} & = - x - \theta , \\
    \dv[2]{\theta}{t} & = - x - \theta ,
}
which immediately yield
\eq{
    x - \theta & = -t + 1 , \\
    x + \theta & = \frac{1}{\sqrt{2}} \sin \qty ( \sqrt{2}t ) + \cos \qty ( \sqrt{2}t ) \eqqcolon u .
}
In the long-time limit, the mechanical energy is given by
\eq{
    E (\bs{q}) & \xrightarrow{t\to\infty} - \frac{1}{8}ut .
}
Since $u$ is an oscillatory function and remains of order unity, the amplitude of the energy grows linearly with time, which is a direct consequence of the spurious instability of the DEM discussed in Sec.~\ref{sec: linear stability and mechanical energy}.

We note that the skew-symmetric part $H_{\mr{skew}}$ arises from the inconsistency between the definitions of the force and torque $\Delta$.
In the definition of $F$, the force due to rotation is defined as $-kR\theta$, meaning that it acts on the surface of the disk while in the definition of $T$, the lever arm is defined as $R-\Delta$ with the deformation of the disk taken into account.
This non-symmetry is thus eliminated by changing the force $-kR\theta$ to $-k(R-\Delta)\theta$ or the lever arm $R-\Delta$ to $R$.
We adopt the former modification in the example presented in the following subsection.
See also Appendix~\ref{sec: alternative torque} for a detailed discussion.

\subsection{Disk between two perpendicular walls}\label{sec: two walls}

Another example is a two-dimensional disk in contact with two mutually perpendicular walls.
This can be viewed as a two-dimensional pseudo-rattler.
The walls are aligned with the $x$- and $y$-axes and the disk is initially placed at $(X, Y)$, where $X,Y>0$.
The tangential displacements of the two walls are measured relative to $(X_w,0)$ and $(0,Y_w)$, respectively, and if $X\neq X_w$ or $Y\neq Y_w$, tangential forces are already present in the initial state.
All spring constants in the interactions between the disk and the walls are set to $k$ and we change the definition of the force as mentioned at the end of the previous subsection.
The other properties of the disk is the same as in the previous subsection.
The force and torque balance requires that the initial state satisfies
\eq{
    0 & = k ( X_w - X ) + k ( R - X ) , \label{force balance x} \\
    0 & = k ( Y_w - Y ) + k ( R - Y ) , \label{force balance y} \\
    0 & = ( R - Y ) k ( X_w - X ) - ( R - X ) k ( Y_w - Y ) . \label{torque balance}
}
The equations of motion are given by
\eq{
    m \dv[2]{x}{t} & = k(R-X-x) + k(X_w-X-x-(R-Y-y)\theta) , \\
    m \dv[2]{y}{t} & = k(R-Y-y) + k(Y_w-Y-y+(R-X-x)\theta) , \\
    J \dv[2]{\theta}{t} & = (R-Y-y) k (X_w-X-x-(R-Y-y)\theta) - (R-X-x) k (Y_w-Y-y-(R-X-x)\theta) .
}
Using Eqs.~\eqref{force balance x}--\eqref{torque balance} and assuming $x$, $y$, and $\theta$ are infinitesimal, we have
\eq{
    m \dv[2]{x}{t} & = -2kx - k(R-Y)\theta , \label{two walls x} \\
    m \dv[2]{y}{t} & = -2ky + k(R-X)\theta , \label{two walls y} \\
    J \dv[2]{\theta}{t} & = -k(R-Y_w) x + k(R-X_w)y + k((R-X)^2-(R-Y)^2)\theta . \label{two walls theta}
}
Setting $m = k = R = 1$, $X = Y = \alpha_0$, and $X_w = Y_w = \alpha_0+\alpha$, Eqs.~\eqref{force balance x}--\eqref{torque balance} reduce to $\alpha_0 = 1+\alpha$ and Eqs.~\eqref{two walls x}--\eqref{two walls theta} simplify to
\eq{
    \dv[2]{x}{t} & = -2x + \alpha\theta , \label{two walls x normalized} \\
    \dv[2]{y}{t} & = -2y - \alpha\theta , \\
    \frac{1}{2}\dv[2]{\theta}{t} & = 2\alpha x -2\alpha y , \label{two walls theta normalized}
}
which leads to
\eq{
    M & = 
    \begin{pmatrix}
        1 & 0 & 0 \\
        0 & 1 & 0 \\
        0 & 0 & 1/2
    \end{pmatrix} , \\
    H & = 
    \begin{pmatrix}
        -2 & 0 &  \alpha \\
        0 & -2 & -\alpha \\
        2\alpha & -2\alpha & 0         
    \end{pmatrix}
    .
}
The energy and dissipation matrices are given by
\eq{
    \mathcal E & =
    \begin{pmatrix}
        -H_{\mr{sym}} & O_3 \\
        O_3 & M
    \end{pmatrix} , \\
    \mathcal R & = 
    \begin{pmatrix}
        O_3 & H_{\mr{skew}} \\
        -H_{\mr{skew}} & O_3
    \end{pmatrix} .
}
The eigenvalues of $\mathcal E$ are $1/2,1,2,1\pm\sqrt{1+9\alpha^2/2}$ and those of $\mathcal R$ are $\pm\sqrt{2}\alpha/2,0$, implying that this disk is on a saddle point and exhibits the spurious instability.
In this example, $H_{\mr{skew}}$ arises unless $\alpha = X_w - X = Y_w - Y = 0$, i.e., the non-symmetric dynamical matrix results from the residual tangential displacements in the initial state.

\section{Another definition of the pair torque}\label{sec: alternative torque}

\begin{table}[h]
    \renewcommand{\arraystretch}{1.2}
    \setlength\tabcolsep{0.5em}
    \centering
    \caption{
    Corresponding table to Table~\ref{tab:samples}, computed using the alternative definition of pair torque in Appendix~\ref{sec: alternative torque}.
    }
    \begin{tabular}{ccc} \hline \hline
        $\mu$ & $\phi$ & $N=128$ \\ \hline
        1.00  & 0.570  &  4      \\
              & 0.585  &  8      \\
              & 0.600  & 10      \\ \hline \hline
    \end{tabular}
    \label{tab:samples_torque}
\end{table}

\begin{figure}[ht]
    \centering

    \begin{minipage}{0.32\linewidth}
        \centering
        \includegraphics[width=\linewidth]{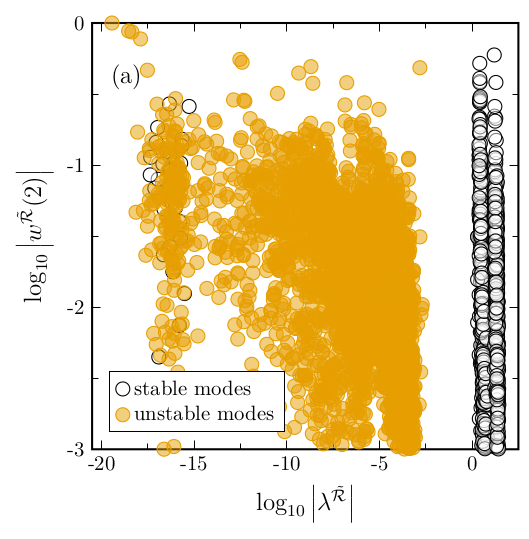}
    \end{minipage}
    \begin{minipage}{0.32\linewidth}
        \centering
        \includegraphics[width=\linewidth]{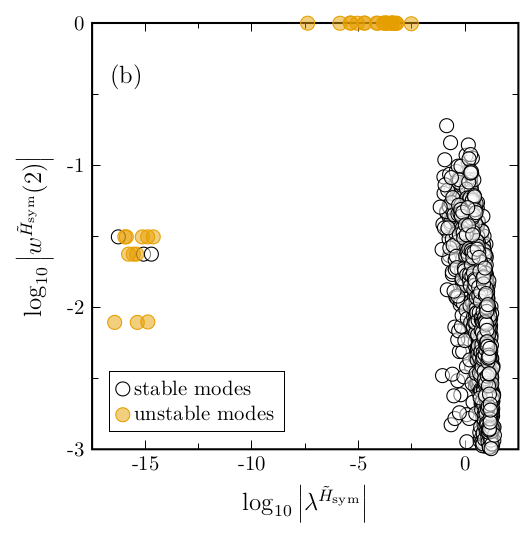}
    \end{minipage}
    \begin{minipage}{0.32\linewidth}
        \centering
        \includegraphics[width=\linewidth]{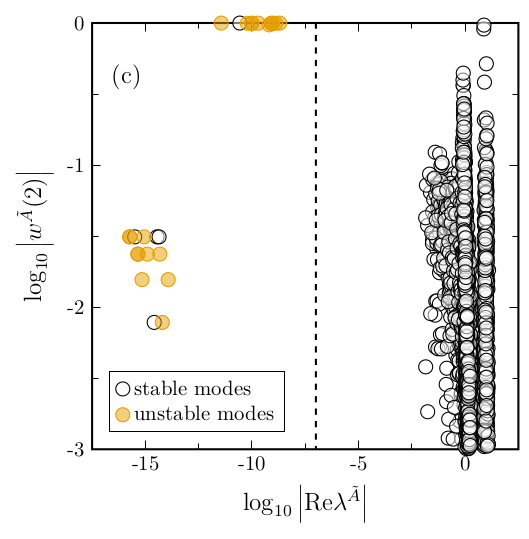}
    \end{minipage}
    \caption{
    Corresponding figure to Fig.~\ref{fig:rattler weight}, computed using the alternative definition of pair torque in Appendix~\ref{sec: alternative torque}.
    }
    \label{fig:rattler weight torque}
\end{figure}

As mentioned in Appendix~\ref{sec: minimal model}, there are several versions of definitions of the tangential force and the torque~\cite{Cundall1979A,Luding2008Cohesive,Plimpton1995Fast,Thompson2022LAMMPS,Chattoraj2019Oscillatory,Chattoraj2019Noise}.
The differences arise from whether the deformation of particles is taken into account when defining their contact point.
In the main text, we define the lever arm of the pair torque as $R_i-\Delta_{ij}/2$ in Eq.~\eqref{pair torque} while we use $\bs{\chi}_{ij}$ and $\bs{\nu}_{ij}$ to define $\bs{F}^\perp_{ij}$, in which only the radii $R_i$ and $R_j$ appear without the correction of the overlap $\Delta_{ij}$.
This definition is adopted in LAMMPS~\cite{Plimpton1995Fast,Thompson2022LAMMPS}.

However, a simpler definition is often adopted in the literature~\cite{Cundall1979A,Chattoraj2019Oscillatory,Chattoraj2019Noise}, and here we verify that the main features of pseudo-rattlers are insensitive to the choice of definition.
In this alternative definition, the lever arm is simply taken to be $R_i$, i.e., the overlap is ignored. 
Accordingly, $\tilde{\Delta}_{ij}$ is replaced by unity in Eqs.~\eqref{dTdr linear}--\eqref{dTdw linear}, and the term involving $\mathcal P^\parallel (\bs{r}_{ij})$ disappears from Eq.~\eqref{dTdr linear}.

Fig.~\ref{fig:rattler weight torque} shows the corresponding results to those in Fig.~\ref{fig:rattler weight}, computed using this simpler definition. 
The figure demonstrates that these results are qualitatively unchanged and that pseudo-rattlers still contribute to the saddle directions of the potential energy defined by $-\tilde H_{\mr{sym}}$, further supporting the conclusion of Sec.~\ref{sec: energy matrix and rattlers}.
See also Table~\ref{tab:samples_torque}, which should be compared with Table~\ref{tab:samples}.

We finally mention another definition of the tangential force~\cite{Luding2008Cohesive}.
In Ref.~\cite{Luding2008Cohesive}, the radii $R_i$ and $R_j$ in $\bs{\chi}_{ij}$ and $\bs{\nu}_{ij}$ were replaced by $R_i-\Delta_{ij}/2$ and $R_j-\Delta_{ij}/2$, in addition to the lever arm of the pair torque.
This definition is essentially equivalent to that in Appendix~\ref{sec: two walls} and the qualitative behavior does not change regardless of the definitions even in this case.

\clearpage

\section{Additional numerical results}\label{sec: additional numerical results}

Fig.~\ref{fig:mu_000} shows a scatter plot of $P^{(r)}$ versus $\omega$ for $N=1024$, $\mu=0.00$, and $\phi=0.690$.
In frictionless systems, $W_i^{(r)}=1$ and $P_i^{(\theta)}=0$ hold for all $i$ because the rotational degrees of freedom are not relevant.
In Figs.~\ref{fig:mu_001} and \ref{fig:mu_100}, we show the counterparts of Fig.~\ref{fig:mu_020} for $\mu=0.01$ and $\mu=1.00$, respectively.
These figures demonstrate that the intermediate- and high-frequency regimes are qualitatively independent of the friction coefficient and the packing fraction.

Likewise, Figs.~\ref{fig:spectra_N128}--\ref{fig:mu_100_N128} are the counterparts of Figs.~\ref{fig:spectra}, \ref{fig:mu_020}, and \ref{fig:mu_000}--\ref{fig:mu_100}, respectively, for $N=128$.
It is clear that the qualitative behavior discussed in the main text is already evident at this relatively small system size, indicating that the system size dependence is sufficiently weak for the present purpose.

\begin{figure}[ht]
    \centering
    \begin{minipage}{0.32\linewidth}
        \centering
        \includegraphics[width=\linewidth]{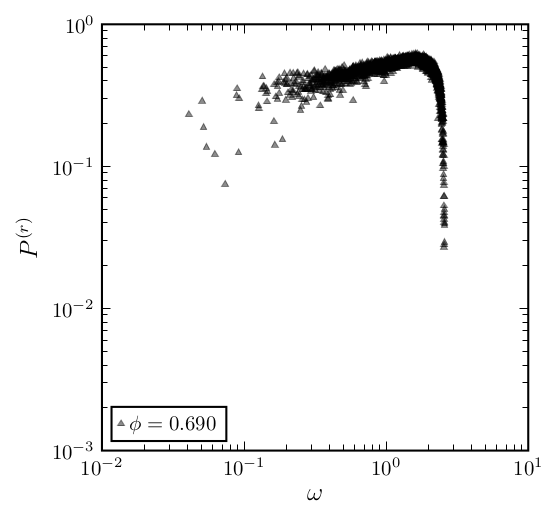}
    \end{minipage}
    \caption{
    Scatter plot of $P^{(r)}$ versus $\omega$ for $N=1024$, $\mu=0.00$, and $\phi=0.690$.
    }
    \label{fig:mu_000}
\end{figure}

\begin{figure}[b]
    \centering

    \begin{minipage}{0.32\linewidth}
        \centering
        \includegraphics[width=\linewidth]{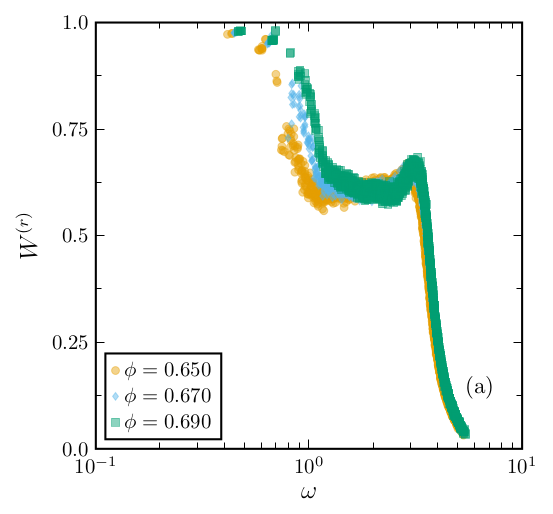}
    \end{minipage}
    \begin{minipage}{0.32\linewidth}
        \centering
        \includegraphics[width=\linewidth]{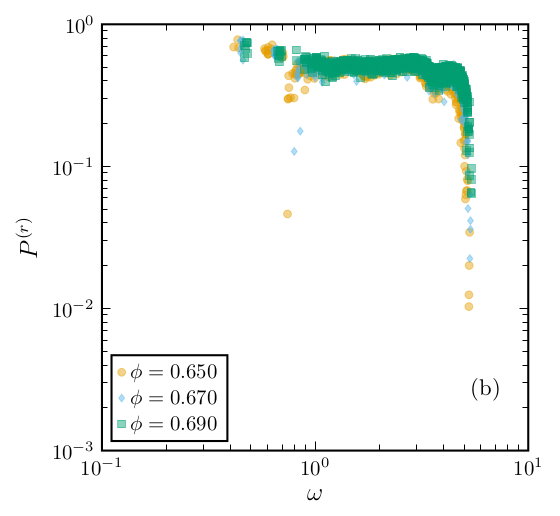}
    \end{minipage}
    \begin{minipage}{0.32\linewidth}
        \centering
        \includegraphics[width=\linewidth]{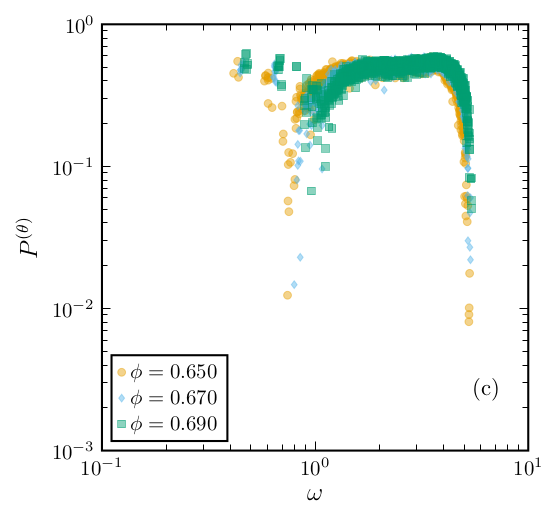}
    \end{minipage}
    \caption{
    Scatter plots of $W^{(r)}$, $P^{(r)}$, and $P^{(\theta)}$ versus $\omega$ in panels (a), (b), and (c), respectively.
    These results were obtained using 10 samples with $N=1024$ and $\mu=0.01$.
    }
    \label{fig:mu_001}
\end{figure}

\begin{figure}
    \centering

    \begin{minipage}{0.32\linewidth}
        \centering
        \includegraphics[width=\linewidth]{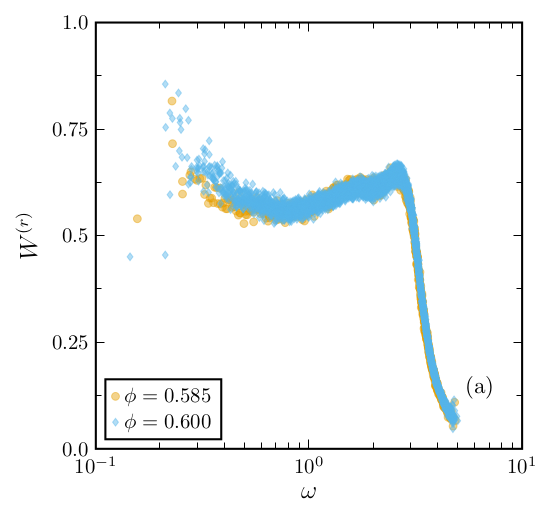}
    \end{minipage}
    \begin{minipage}{0.32\linewidth}
        \centering
        \includegraphics[width=\linewidth]{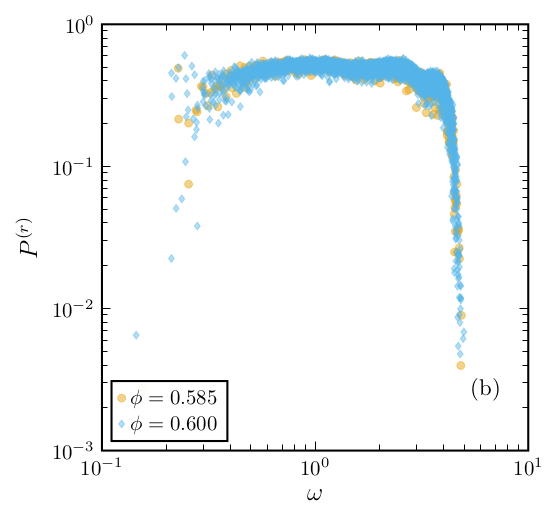}
    \end{minipage}
    \begin{minipage}{0.32\linewidth}
        \centering
        \includegraphics[width=\linewidth]{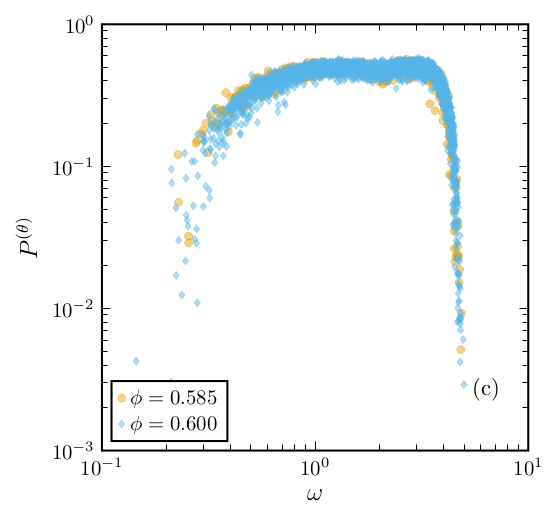}
    \end{minipage}
    \caption{
    Scatter plots of $W^{(r)}$, $P^{(r)}$, and $P^{(\theta)}$ versus $\omega$ in panels (a), (b), and (c), respectively.
    These results were obtained using two samples at $\phi=0.585$ and nine samples at $\phi=0.600$ with $N=1024$ and $\mu=1.00$.
    }
    \label{fig:mu_100}
\end{figure}

\begin{figure}
    \centering

    \begin{minipage}{0.32\linewidth}
        \centering
        \includegraphics[width=\linewidth]{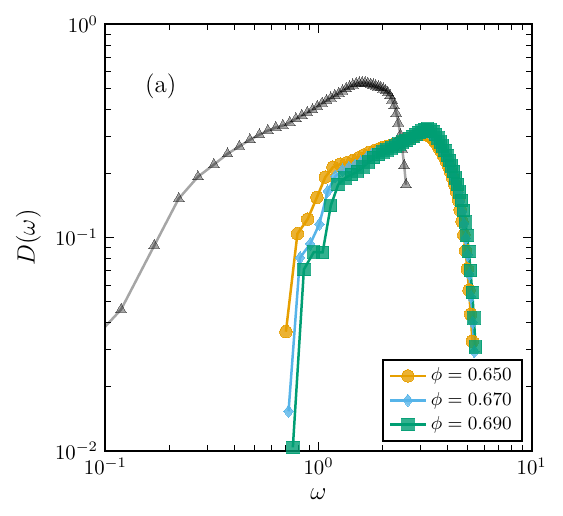}
    \end{minipage}
    \begin{minipage}{0.32\linewidth}
        \centering
        \includegraphics[width=\linewidth]{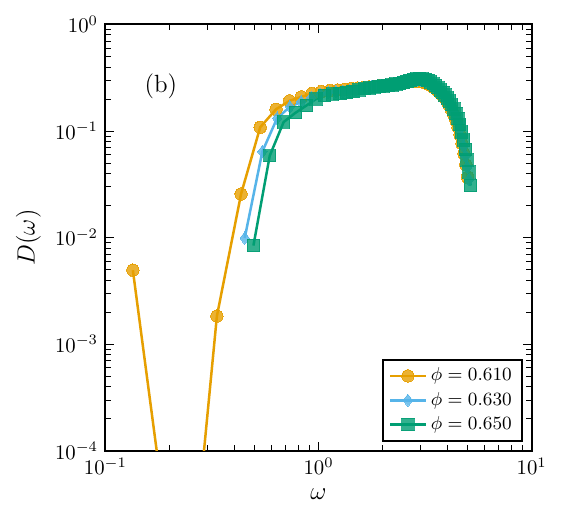}
    \end{minipage}
    \begin{minipage}{0.32\linewidth}
        \centering
        \includegraphics[width=\linewidth]{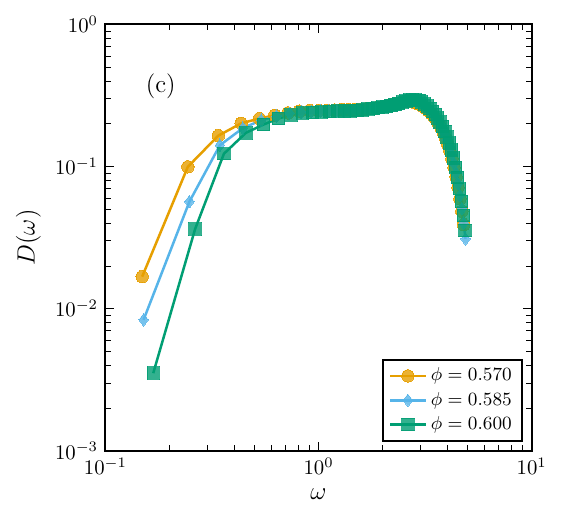}
    \end{minipage}
    \caption{
    Densities of states of the energy matrix $\tilde H_{\mr{sym}}$ for $\mu=0.01, 0.20,$ and $1.00$ in panels (a), (b), and (c), respectively.
    In panel (a), the density of states for $\mu=0.00$ and $\phi=0.69$ is also shown as black triangles for comparison.
    These results were obtained using samples with $N=128$.
    }
    \label{fig:spectra_N128}
\end{figure}

\begin{figure}
    \centering

    \begin{minipage}{0.32\linewidth}
        \centering
        \includegraphics[width=\linewidth]{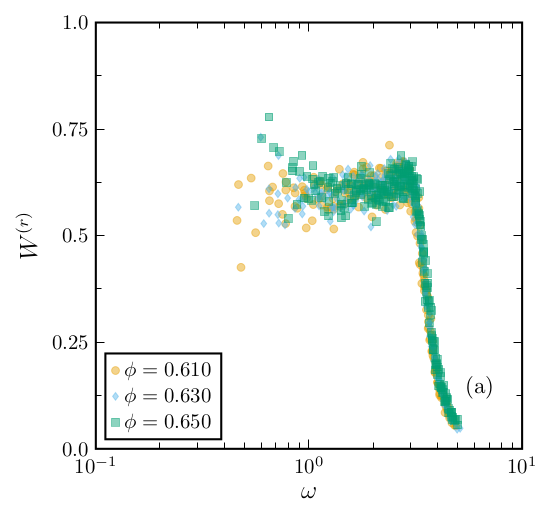}
    \end{minipage}
    \begin{minipage}{0.32\linewidth}
        \centering
        \includegraphics[width=\linewidth]{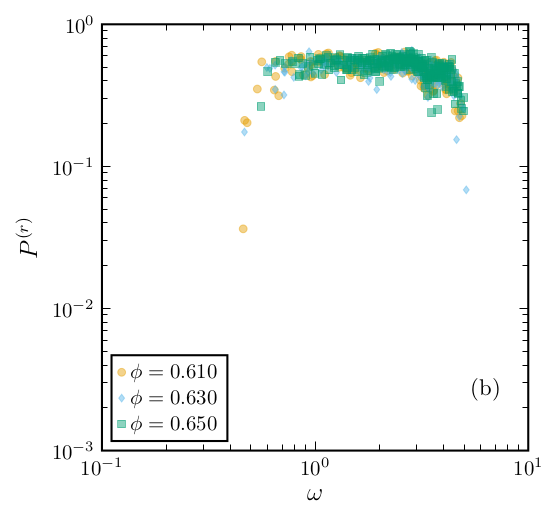}
    \end{minipage}
    \begin{minipage}{0.32\linewidth}
        \centering
        \includegraphics[width=\linewidth]{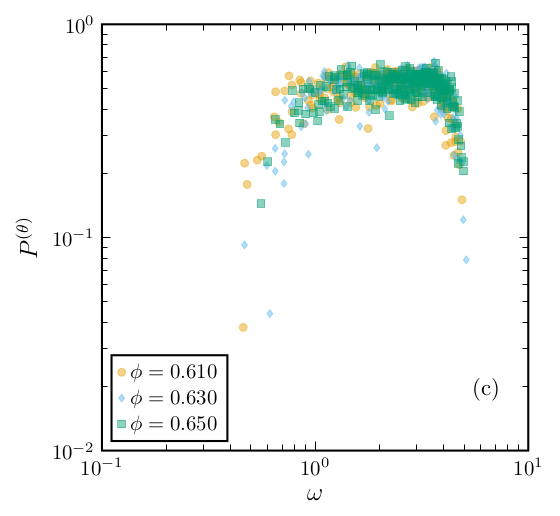}
    \end{minipage}
    \caption{
    Scatter plots of $W^{(r)}$, $P^{(r)}$, and $P^{(\theta)}$ versus $\omega$ in panels (a), (b), and (c), respectively.
    These results were obtained using 10 samples with $N=128$ and $\mu=0.20$.
    }
    \label{fig:mu_020_N128}
\end{figure}

\begin{figure}
    \centering
    \begin{minipage}{0.32\linewidth}
        \centering
        \includegraphics[width=\linewidth]{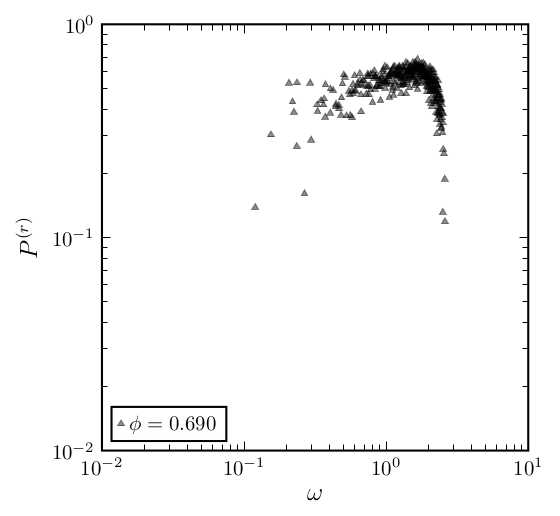}
    \end{minipage}
    \caption{
    Scatter plot of $P^{(r)}$ versus $\omega$ for $N=128$, $\mu=0.00$, and $\phi=0.690$.
    }
    \label{fig:mu_000_N128}
\end{figure}

\begin{figure}
    \centering

    \begin{minipage}{0.32\linewidth}
        \centering
        \includegraphics[width=\linewidth]{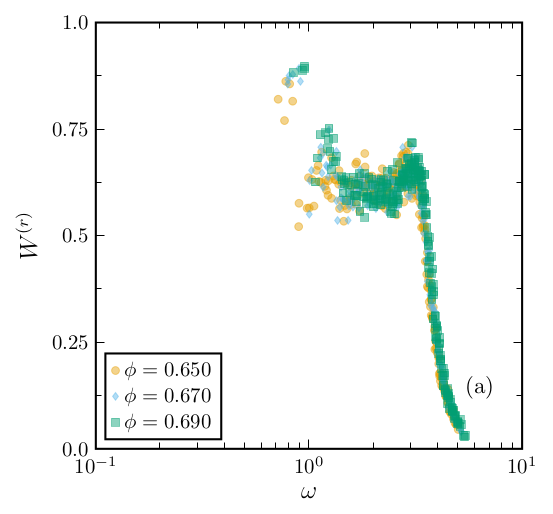}
    \end{minipage}
    \begin{minipage}{0.32\linewidth}
        \centering
        \includegraphics[width=\linewidth]{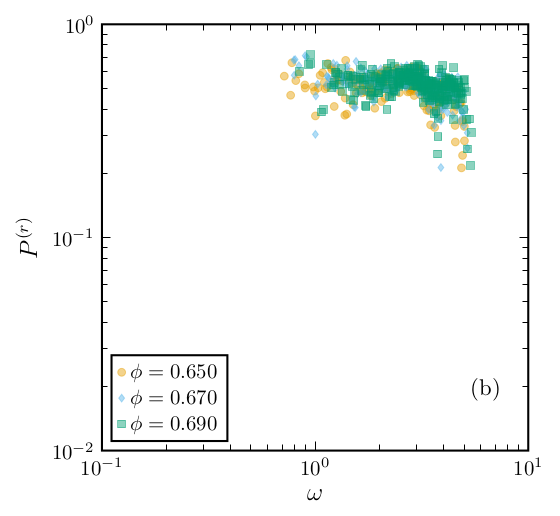}
    \end{minipage}
    \begin{minipage}{0.32\linewidth}
        \centering
        \includegraphics[width=\linewidth]{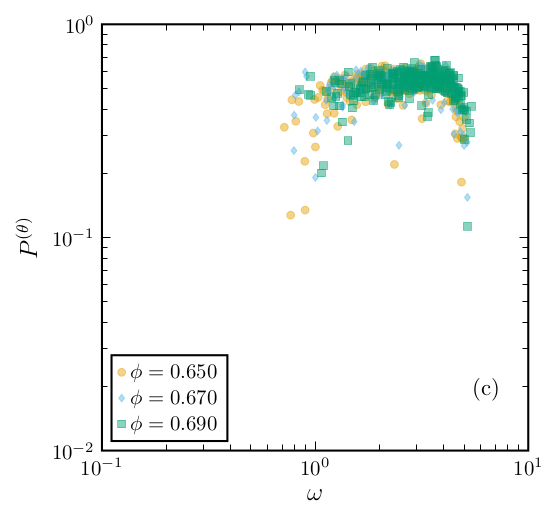}
    \end{minipage}
    \caption{
    Scatter plots of $W^{(r)}$, $P^{(r)}$, and $P^{(\theta)}$ versus $\omega$ in panels (a), (b), and (c), respectively.
    These results were obtained using 10 samples with $N=128$ and $\mu=0.01$.
    }
    \label{fig:mu_001_N128}
\end{figure}

\begin{figure}
    \centering

    \begin{minipage}{0.32\linewidth}
        \centering
        \includegraphics[width=\linewidth]{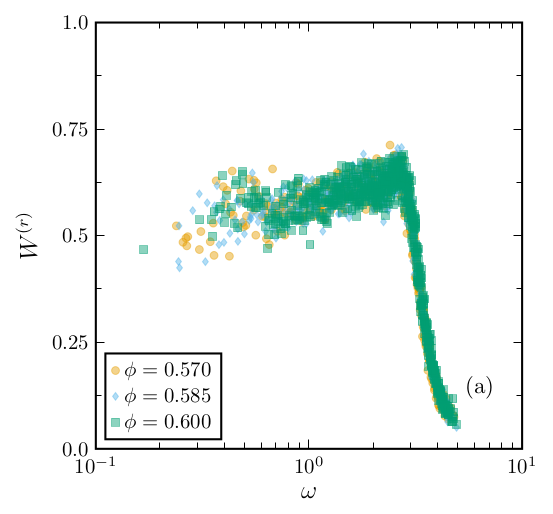}
    \end{minipage}
    \begin{minipage}{0.32\linewidth}
        \centering
        \includegraphics[width=\linewidth]{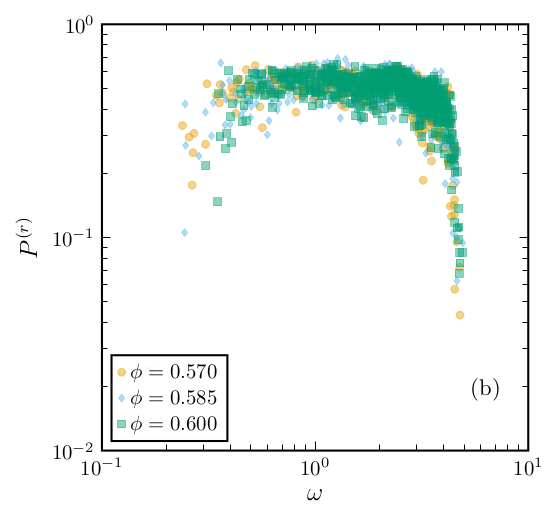}
    \end{minipage}
    \begin{minipage}{0.32\linewidth}
        \centering
        \includegraphics[width=\linewidth]{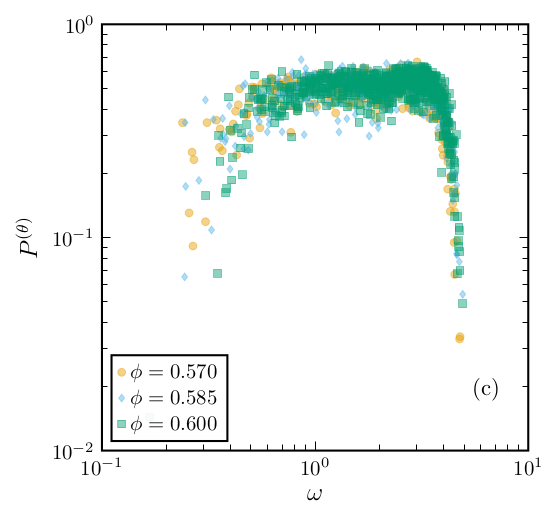}
    \end{minipage}
    \caption{
    Scatter plots of $W^{(r)}$, $P^{(r)}$, and $P^{(\theta)}$ versus $\omega$ in panels (a), (b), and (c), respectively.
    These results were obtained using 7 samples at $\phi=0.570$, 9 samples at $\phi=0.585$, and 10 samples at $\phi=0.600$, with $N=128$ and $\mu=1.00$.
    }
    \label{fig:mu_100_N128}
\end{figure}

\clearpage

\end{document}